\newtheorem{theorem}{Theorem}[section]
\newtheorem{lemma}[theorem]{Lemma}
\newtheorem{remark}{Remark}
\newcommand{\al}{\alpha}
\newcommand{\bt}{\beta}
\newcommand{\la}{\lambda}
\newcommand{\ka}{\kappa}
\newcommand{\s}{\sigma}
\newcommand{\be}{\begin{equation}}
\newcommand{\ee}{\end{equation}}
\newcommand{\bea}{\begin{eqnarray}}
\newcommand{\eea}{\end{eqnarray}}
\newcommand{\no}{\nonumber}
\numberwithin{equation}{section}
\begin{document}

\title{Asymptotics of the Smallest Eigenvalue Distributions of\\Freud Unitary Ensembles}
\author{Chao Min\thanks{School of Mathematical Sciences, Huaqiao University, Quanzhou 362021, China; Email: chaomin@hqu.edu.cn}\: and Liwei Wang\thanks{School of Mathematical Sciences, Huaqiao University, Quanzhou 362021, China}}


\date{February 7, 2024}
\maketitle
\begin{abstract}
We consider the smallest eigenvalue distributions of some Freud unitary ensembles, that is, the probabilities that all the eigenvalues of the Hermitian matrices from the ensembles lie in the interval $(t,\infty)$. This problem is related to the Hankel determinants generated by the Freud weights with a jump discontinuity. By using Chen and Ismail's ladder operator approach, we obtain the discrete systems for the recurrence coefficients of the corresponding orthogonal polynomials. This enables us to derive the large $n$ asymptotics of the recurrence coefficients via Dyson's Coulomb fluid approach. We finally obtain the large $n$ asymptotics of the Hankel determinants and that of the probabilities from their relations to the recurrence coefficients and with the aid of some recent results in the literature.
\end{abstract}

$\mathbf{Keywords}$: Random matrix theory; Freud unitary ensembles; Smallest eigenvalue distributions;

Ladder operators; Coulomb fluid; Large $n$ asymptotics.



\section{Introduction}
In random matrix theory (RMT), the joint probability density function for the eigenvalues $\left\{x_j\right\}_{j=1}^n$of $n\times n$
Hermitian matrices from a unitary ensemble is given by \cite{Mehta}
$$
p\left(x_1, x_2, \ldots, x_n\right) \prod_{j=1}^n d x_j=\frac{1}{D_n[w_0]} \frac{1}{n !} \prod_{1 \leq i<j \leq n}(x_j-x_i)^2 \prod_{k=1}^n w_0(x_k) d x_k,
$$
where $ w_0(x)$ is a weight supported on an interval $[a, b]$ ($a$ may be $-\infty$ and/or $b$ may be $\infty$), and all the moments
$$
\mu_j:=\int_a^b x^j w_0(x) d x, \qquad j=0,1,2, \ldots
$$
exist. Here $D_n[w_0]$ is the partition function or the normalization constant
$$
D_n[w_0]=\frac{1}{n !} \int_{[a, b]^n} \prod_{1 \leq i<j \leq n}(x_j-x_i)^2 \prod_{k=1}^n w_0(x_k) d x_k.
$$

A \textit{Freud weight} is a weight function of the form
$
\exp(-Q(x)),\; x\in\mathbb{R},
$
where $Q(x)$ is real, even, nonnegative, and continuously differentiable. Of special interest are the cases $Q(x)=x^{2m},\;m=1,2,3,\ldots$ \cite[Section 18.32]{NIST}.
In this paper, we take
$$
w_0(x):=\mathrm{e}^{-x^{2m}},\qquad m\in\mathbb{Z}^{+},\; x \in \mathbb{R},
$$
which corresponds to the Freud unitary ensemble (Gaussian unitary ensemble for the case $m=1$). We mention that Claeys, Krasovsky and Minakov \cite{CKM} have studied the more general Freud unitary ensemble recently; see also \cite{KM}.

The probability that all the eigenvalues are greater than $t\; (t\in\mathbb{R})$ in the Freud unitary ensemble is given by
\be\label{pnt1}
\mathbb{P}(n, t)  =\frac{\frac{1}{n!} \int_{(t, \infty)^n} \prod_{1 \leq i<j \leq n}(x_j-x_i)^2 \prod_{k=1}^n w_0(x_k) d x_k}{\frac{1}{n !} \int_{(-\infty, \infty)^n} \prod_{1 \leq i<j \leq n}(x_j-x_i)^2 \prod_{k=1}^n w_0(x_k) d x_k}.
\ee
By using Andr\'{e}ief's or Heine's integration formula, the above $n$-fold integrals can be expressed as the Hankel determinants \cite[p. 27]{Szego}, i.e.,
\be\label{pnt}
\mathbb{P}(n, t) = \frac{D_n(t)}{D_n(-\infty)},
\ee
where
$$
D_n(t)=\det\left(\int_t^{\infty} x^{i+j} w_0(x) d x\right)_{i, j=0}^{n-1},
$$
$$
D_n(-\infty)=\det\left(\int_{-\infty}^{\infty} x^{i+j} w_0(x) d x\right)_{i, j=0}^{n-1},
$$
and we have $D_n(-\infty)=D_n[w_0]$. The similar topic on the smallest or largest eigenvalue distributions of the Gaussian, Laguerre and Jacobi random matrix ensembles has been studied extensively, see e.g. \cite{Adler,DK,Forrester,FT,Haine,LCF,LMC,MMM,TW}.

The main goal of this paper is to derive the large $n$ asymptotic expansion (including the higher order terms) of $\mathbb{P}(n, t)$ for fixed $t\in\mathbb{R}$. We mainly consider the $m=1, 2, 3$ cases. It will be seen that our approach can be applied to study the higher $m$ cases. Taking account of (\ref{pnt}) and noting that the large $n$ asymptotics with higher order terms of the denominator $D_n(-\infty)$ has been obtained in \cite{MWC} recently ($D_n(-\infty)$ has the closed-form expression when $m=1$ \cite[p. 321]{Mehta}), thus we need to deal with the numerator $D_n(t)$.

More generally, we consider the Hankel determinant generated by the Freud weight with a jump discontinuity
$$
\mathcal{D}_n(t)=\det\left(\int_{-\infty}^{\infty} x^{i+j} w(x;t) d x\right)_{i, j=0}^{n-1},
$$
and
\begin{equation}\label{weight}
w(x;t):=w_0(x)(A+B \theta(x-t)),\qquad x,\; t \in \mathbb{R},
\end{equation}
where $A \geq 0$, $A+B \geq 0$ and $\theta(x)$ is the Heaviside step function, i.e., $\theta(x)$ is $1$ for $x>0$ and $0$ otherwise. When $A=0, B=1$, we have $\mathcal{D}_n(t)=D_n(t)$. Hankel determinants for the Gaussian, Laguerre and Jacobi weights with jump discontinuities and Fisher-Hartwig singularities have been studied a lot over the past few years, see e.g. \cite{BC2009,BCI,CG,ChenZhang,Its,Lyu,Min2019,Wu1,Wu2}.

In the paper \cite{BC2009}, Basor and Chen considered the Hankel determinant for the general weight with a jump discontinuity, and derived the ladder operators and compatibility conditions for the orthogonal polynomials with this discontinuous weight. We will use their method to analyze $\mathcal{D}_n(t)$.

It is well known that $\mathcal{D}_n(t)$ can be expressed as the product of the square of $L^2$ norms of the corresponding monic orthogonal polynomials (see e.g. \cite[(2.1.6)]{Ismail}). That is,
\be\label{hankel}
\mathcal{D}_n(t)=\prod_{j=0}^{n-1}h_{j}(t),
\ee
where $h_j(t)>0$ is defined by
$$
h_j(t)\delta_{jk}=\int_{-\infty}^{\infty}P_j(x;t)P_k(x;t)w(x;t)dx,\qquad j, k=0,1,2,\ldots.
$$
Here $\delta_{jk}$ denotes the Kronecker delta, and $P_j(x; t)$ are the \textit{monic} polynomials of degree $j$, orthogonal with respect to the weight (\ref{weight}).

An important property of the orthogonal polynomials is the three-term recurrence relation
$$
xP_{n}(x;t)=P_{n+1}(x;t)+\al_n(t) P_n(x;t)+\beta_{n}(t)P_{n-1}(x;t),
$$
with the initial conditions
$$
P_0(x;t):=1,\qquad \beta_0(t) P_{-1}(x;t):=0.
$$
It can be shown that
\be\label{be}
\beta_{n}(t)=\frac{h_{n}(t)}{h_{n-1}(t)}
\ee
and
\begin{equation}\label{alpha1}
\alpha_{n}(t)=\mathrm{p}(n,t)-\mathrm{p}(n+1,t),
\end{equation}
where $\mathrm{p}(n, t)$ is the sub-leading coefficient of the monic polynomial $P_n(x; t)$, i.e.,
$$
P_n\left(x; t\right)=x^n+\mathrm{p}(n, t) x^{n-1}+\cdots,
$$
with the initial value $\mathrm{p}(0, t)=0$.
Moreover, from (\ref{alpha1}) we have the identity
\be\label{sum}
\sum_{j=0}^{n-1}\al_j(t)=-\mathrm{p}(n,t).
\ee
The combination of (\ref{be}) and (\ref{hankel}) gives
\be\label{bd}
\bt_n(t)=\frac{\mathcal{D}_{n+1}(t)\mathcal{D}_{n-1}(t)}{\mathcal{D}_n^2(t)}.
\ee
See \cite{Chihara,Ismail,Szego} for information about orthogonal polynomials.

The ladder operator approach developed by Chen and Ismail \cite{Chen1997} is a very useful and powerful tool to analyze the recurrence coefficients of orthogonal polynomials and the associated Hankel determinants. It has also been successfully applied to solve problems about orthogonal polynomials for the weights with jump discontinuities \cite{BC2009,ChenZhang,Lyu,Min2019}. Following Basor and Chen \cite{BC2009}, our monic orthogonal polynomials with the weight (\ref{weight}) satisfy the lowering and raising operators
$$
\left(\frac{d}{dx}+B_{n}(x)\right)P_{n}(x)=\beta_{n}A_{n}(x)P_{n-1}(x),
$$
$$
\left(\frac{d}{dx}-B_{n}(x)-\mathrm{v}_0'(x)\right)P_{n-1}(x)=-A_{n-1}(x)P_{n}(x),
$$
where $\mathrm{v_0}(x):=-\ln w_0(x)$ and
\be\label{an}
A_{n}(x):=\frac{R_n(t)}{x-t}+\frac{1}{h_{n}}\int_{-\infty}^{\infty}\frac{\mathrm{v}_0'(x)-\mathrm{v}_0'(y)}{x-y}P_{n}^{2}(y)w(y)dy,
\ee
\be\label{bn}
B_{n}(x):=\frac{r_n(t)}{x-t}+\frac{1}{h_{n-1}}\int_{-\infty}^{\infty}\frac{\mathrm{v}_0'(x)-\mathrm{v}_0'(y)}{x-y}P_{n}(y)P_{n-1}(y)w(y)dy.
\ee
Here $R_n(t)$ and $r_n(t)$ are the auxiliary quantities defined by
\be\label{Rnt}
R_n(t):=B \frac{w_0(t)}{h_n(t)}P_n^2(t; t),
\ee
\be\label{rnt}
r_n(t):=B \frac{w_0(t)}{h_{n-1}(t)} P_n(t;t) P_{n-1}(t;t),
\ee
and $P_n(t;t):=P_n(x;t)|_{x=t}$. Note that we often do not display the $t$-dependence of the recurrence coefficients $\al_n, \bt_n$ and the quantity $h_n$ for brevity.

From the definitions of the functions $A_n(x)$ and $B_n(x)$ and taking advantage of the three-term recurrence relation, it was proved that
$A_n(x)$ and $B_n(x)$ satisfy the following compatibility conditions \cite{BC2009}:
\be
B_{n+1}(x)+B_{n}(x)=(x-\alpha_n) A_{n}(x)-\mathrm{v}_0'(x), \tag{$S_{1}$}
\ee
\be
1+(x-\alpha_n)(B_{n+1}(x)-B_{n}(x))=\beta_{n+1}A_{n+1}(x)-\beta_{n}A_{n-1}(x), \tag{$S_{2}$}
\ee
\be
B_{n}^{2}(x)+\mathrm{v}_0'(x)B_{n}(x)+\sum_{j=0}^{n-1}A_{j}(x)=\beta_{n}A_{n}(x)A_{n-1}(x). \tag{$S_{2}'$}
\ee
Here ($S_{2}'$) is obtained from the combination of ($S_{1}$) and ($S_{2}$), and is usually more useful compared with ($S_{2}$) in practice. The compatibility conditions have the information needed to determine the recurrence coefficients $\alpha_n$, $\bt_n$ and other auxiliary quantities.

From the orthogonality we have
$$
\int_{-\infty}^{\infty}P_n(x;t)P_{n-1}(x;t)w_0(x)(A+B\theta(x-t)) dx=0.
$$
Taking a derivative with respect to $t$ gives
\be\label{eq1}
\frac{d}{dt}\mathrm{p}(n,t)=r_n(t).
\ee
Using (\ref{alpha1}), it follows that
\be\label{eq2}
\alpha_n'(t)=r_n(t)-r_{n+1}(t).
\ee

On the other hand, taking a derivative with respect to $t$ in the equality
$$
h_n(t)=\int_{-\infty}^{\infty}P_n^2(x;t)w_0(x)(A+B\theta(x-t)) dx,
$$
we find
\be\label{dlnhnt}
\frac{d}{dt}\ln h_n(t)=-R_n(t).
\ee
It follows from (\ref{be}) that
\be\label{eq3}
\beta_{n}'(t)=\beta_n \left(R_{n-1}(t)- R_n(t)\right).
\ee
Moreover, the combination of (\ref{hankel}) and (\ref{dlnhnt}) gives
\be\label{eq4}
\frac{d}{dt}\ln \mathcal{D}_n(t)=-\sum_{j=0}^{n-1}R_j(t).
\ee
\begin{remark}
It can be seen that the identities (\ref{eq1}), (\ref{eq2}), (\ref{dlnhnt}), (\ref{eq3}) and (\ref{eq4}) hold for the general smooth weight function $w_0(x)$ on $[a, b]$ with a jump at $t\in[a, b]$.
\end{remark}
For the $A=0,\; B=1$ case, recall that
$$
D_n(t)=\frac{1}{n!} \int_{(t, \infty)^n} \prod_{1 \leq i<j \leq n}(x_j-x_i)^2 \prod_{k=1}^n w_0(x_k) d x_k.
$$
This can be viewed as the partition function for the corresponding unitary random matrix ensemble on $(t, \infty)^n$.
From Dyson's Coulomb fluid approach \cite{Dyson}, the eigenvalues (particles) from the ensemble can be approximated as a continuous fluid with an equilibrium density $\s(x)$ supported on $J$, a subset of $\mathbb{R}$, for sufficiently large $n$. Since the potential $\mathrm{v_0}(x)=-\ln w_0(x)=x^{2m}\; (m\in\mathbb{Z}^{+})$ is convex, $J$ is a single interval, say, $(t,b)$.
According to the work of Chen and Ismail \cite{ChenIsmail} (see also \cite{Saff}), the equilibrium density $\sigma(x)$ is determined by minimizing the free energy functional
\be\label{fe}
F[\s]:=\int_{t}^{b}\s(x)\mathrm{v}_0(x)dx-\int_{t}^{b}\int_{t}^{b}\s(x)\ln|x-y|\s(y)dxdy,
\ee
subject to the normalization condition
\be\label{con}
\int_{t}^{b}\s(x)dx=n.
\ee

Following \cite{ChenIsmail}, it is found that the density $\s(x)$ satisfies the integral equation
\be\label{ie}
\mathrm{v_0}(x)-2\int_{t}^{b}\ln|x-y|\s(y)dy=\mathcal{A},\qquad x\in (t,b),
\ee
where $\mathcal{A}$ is the Lagrange multiplier. It is worth noting that $\mathcal{A}$ is a constant independent of $x$ but it depends on $n$ and $t$. From (\ref{fe}) and taking a partial derivative with respect to $n$, we have
\be\label{pd}
\frac{\partial F[\s]}{\partial n}=\mathcal{A},
\ee
where use has been made of (\ref{ie}) and (\ref{con}).

Equation (\ref{ie}) is transformed into the following singular integral equation by taking a derivative with respect to $x$:
\be\label{sie}
\mathrm{v}_0'(x)-2P\int_{t}^{b}\frac{\sigma(y)}{x-y}dy=0,\qquad x\in (t,b),
\ee
where $P$ denotes the Cauchy principal value. Multiplying by $\frac{x}{\sqrt{(b-x)(x-t)}}$ on both sides of (\ref{sie}) and integrating from $t$ to $b$ with respect to $x$, we obtain
\be\label{sup2}
\frac{1}{2\pi}\int_{t}^{b}\frac{x\:\mathrm{v}_0'(x)}{\sqrt{(b-x)(x-t)}}dx=n,
\ee
where use has been made of (\ref{con}) and the important integral formula
$$
P\int_{t}^{b}\frac{1}{(x-y)\sqrt{(b-x)(x-t)}}dx=0,\qquad y\in (t,b).
$$
The endpoint of the support of the equilibrium density, $b$, is determined by equation (\ref{sup2}).
Furthermore, it was shown in \cite{ChenIsmail} that as $n\rightarrow\infty$,
\begin{equation}\label{al}
\alpha_n=\frac{t+b}{2}+O\left(\frac{\partial^2 \mathcal{A}}{\partial t \partial n}\right),
\end{equation}
\begin{equation}\label{beta}
\beta_n=\left(\frac{b-t}{4}\right)^2\left(1+O\left(\frac{\partial^3 \mathcal{A}}{\partial n^3}\right)\right).
\end{equation}

The rest of the paper is organized as follows. In Section 2, we consider the smallest eigenvalue distribution of the Gaussian unitary ensemble, which is the simplest case corresponding to $m=1$. Based on some finite $n$ results about orthogonal polynomials for the Gaussian weight with a jump discontinuity in \cite{Min2019}, we derive the discrete system for the recurrence coefficients of the orthogonal polynomials, and establish the relation between the logarithmic derivative of the Hankel determinant and the recurrence coefficients. By using Dyson's Coulomb fluid approach, we derive the large $n$ asymptotic expansions of the recurrence coefficients with the aid of the discrete system. The large $n$ asyptotics of the Hankel determinant $D_n(t)$ and then that of the probability $\mathbb{P}(n, t)$ are obtained from their relations with the recurrence coefficients.

In Section 3, we study the smallest eigenvalue distribution of the quartic Freud unitary ensemble ($m=2$). Applying the ladder operator approach to the orthogonal polynomials for the quartic Freud weight with a jump discontinuity, we derive the discrete system satisfied by the recurrence coefficients. This enables us to derive the large $n$ asymptotic expansions of the recurrence coefficients by using Dyson's Coulomb fluid approach. The large $n$ asymptotics of $D_n(t)$ and $\mathbb{P}(n, t)$ are obtained from their relations with the recurrence coefficients and the employment of some recent results in the literature. In Section 4, we repeat the development in Section 3 but for the sextic Freud unitary ensemble ($m=3$), which is more complicated.
Finally we give some discussion in Section 5.

\section{Gaussian Unitary Ensemble}
\subsection{Orthogonal Polynomials for the Gaussian Weight with a Jump}
In this section, we consider the simplest case for $m=1$. The weight function $w(x)$ now is
\begin{equation}\label{wei1}
w(x;t)=w_0(x)(A+B \theta(x-t)),\qquad x,\; t \in \mathbb{R},
\end{equation}
where $w_0(x)$ is the Gaussian weight
$$
w_0(x)=\mathrm{e}^{-x^{2}},\qquad x \in \mathbb{R}.
$$

Hankel determinant and orthogonal polynomials for the Gaussian weight with a jump discontinuity in (\ref{wei1}) have been studied by Min and Chen \cite{Min2019}. In the paper \cite{Min2019}, the ladder operator approach has been used to obtain a series of differential and difference equations satisfied by the recurrence coefficients and the auxiliary quantities. It was proved that the diagonal recurrence coefficient $\al_n(t)$ satisfies the fourth Painlev\'{e} equation, and the logarithmic derivative of the Hankel determinant $\mathcal{D}_n(t)$ satisfies the Jimbo-Miwa-Okamoto $\s$-form of the Painlev\'{e} IV equation. The asymptotics of the recurrence coefficients and the logarithmic derivative of the Hankel determinant have also been studied when the jump discontinuity $t$ approaches the soft edge of the Gaussian unitary ensemble. See also \cite{BCI,Its,XZ}.

Following \cite{Min2019}, we have
\be\label{pt1}
\mathrm{v_0}(x)=-\ln w(x_0)=x^2.
\ee
It follows that
\be\label{vp1}
\frac{\mathrm{v}_0'(x)-\mathrm{v}_0'(y)}{x-y}=2.
\ee
Inserting (\ref{vp1}) into the definitions of $A_n(x)$ and $B_n(x)$ in (\ref{an}) and (\ref{bn}), we find
\be\label{anz1}
A_n(x)
=\frac{R_n(t)}{x-t}+2,
\ee
\be\label{bnz1}
B_n(x) =\frac{r_n(t)}{x-t},
\ee
where $R_n(t)$ and $r_n(t)$ are given by (\ref{Rnt}) and (\ref{rnt}), respectively.

Substituting (\ref{anz1}) and (\ref{bnz1}) into ($S_{1}$), we obtain the following two identities:
\be\label{s11}
R_{n}(t)=2\alpha_{n},
\ee
\be\label{s12}
r_{n}(t)+r_{n+1}(t)=(t-\alpha_{n})R_{n}(t).
\ee
Similarly, substituting (\ref{anz1}) and (\ref{bnz1}) into ($S_{2}'$) gives
\be\label{s23}
r_{n}(t)=2\beta_{n}-n,
\ee
\be\label{s21}
r_{n}^{2}(t)=\beta_{n}R_{n-1}(t)R_{n}(t),
\ee
\be\label{s22a}
2tr_{n}(t)+\sum_{j=0}^{n-1}R_{j}(t)=2\beta_{n}\left(R_{n-1}(t)+R_{n}(t)\right).
\ee

Now we are ready to obtain the discrete system for the recurrence coefficients.
\begin{theorem}\label{th1}
The recurrence coefficients $\alpha_n$ and $\bt_n$ satisfy a pair of difference equations
\begin{subequations}\label{ds}
\be\label{d11}
2\bt_n+2\bt_{n+1}-2n-1=2\al_n(t-\al_n),
\ee
\be\label{d12}
(2\bt_n-n)^2=4\al_{n-1}\al_n\bt_n.
\ee
\end{subequations}
\end{theorem}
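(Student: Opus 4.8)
The plan is to obtain the closed system (\ref{ds}) by pure algebraic elimination of the auxiliary quantities $R_n(t)$ and $r_n(t)$ from the five relations (\ref{s11})--(\ref{s22a}) that were already produced above by feeding (\ref{anz1}) and (\ref{bnz1}) into the compatibility conditions $(S_1)$ and $(S_2')$. The key observation is that (\ref{s11}) expresses $R_n(t)$ directly in terms of $\al_n$, while (\ref{s23}) expresses $r_n(t)$ directly in terms of $\bt_n$ and $n$; hence substituting these two identities into any of the remaining relations turns it into an identity among $\al_n$, $\bt_n$ and $n$ alone.

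For (\ref{d11}) I would start from (\ref{s12}), replace $R_n(t)$ by $2\al_n$ using (\ref{s11}) to get $r_n(t)+r_{n+1}(t)=2\al_n(t-\al_n)$, and then substitute $r_n(t)=2\bt_n-n$ and $r_{n+1}(t)=2\bt_{n+1}-(n+1)$ from (\ref{s23}); collecting terms on the left gives $2\bt_n+2\bt_{n+1}-2n-1=2\al_n(t-\al_n)$. For (\ref{d12}) I would start from (\ref{s21}), replace $R_{n-1}(t)$ and $R_n(t)$ by $2\al_{n-1}$ and $2\al_n$ respectively (invoking (\ref{s11}) at the two consecutive indices $n-1$ and $n$), which yields $r_n^2(t)=4\al_{n-1}\al_n\bt_n$, and then substitute $r_n(t)=2\bt_n-n$ on the left to obtain $(2\bt_n-n)^2=4\al_{n-1}\al_n\bt_n$. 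The relation (\ref{s22a}) is not needed for the statement; combined with (\ref{sum}) it furnishes an independent consistency check and, together with (\ref{d11})--(\ref{d12}), further difference identities for the partial sum $\sum_{j=0}^{n-1}\al_j$.

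There is essentially no analytic obstacle at the level of the theorem itself: its entire content is this elimination. The only place where genuine care is required lies upstream, in establishing (\ref{s11})--(\ref{s22a}) — a step already performed in the excerpt — which relies on the fact that, because of (\ref{vp1}), the integral terms in (\ref{an})--(\ref{bn}) collapse to constants, so that $A_n(x)$ and $B_n(x)$ are rational functions with a single simple pole at $x=t$; one then substitutes into $(S_1)$ and $(S_2')$, clears the denominator $x-t$, uses the expansion $(x-\al_n)/(x-t)=1+(t-\al_n)/(x-t)$, and matches the residue at $x=t$ and the polynomial (in practice constant) part separately. Given those five relations, the derivation of (\ref{ds}) is immediate.
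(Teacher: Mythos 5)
Your proposal is correct and coincides with the paper's own proof: the paper likewise obtains (\ref{d11}) by substituting $R_n(t)=2\al_n$ from (\ref{s11}) and $r_n(t)=2\bt_n-n$ from (\ref{s23}) into (\ref{s12}), and obtains (\ref{d12}) by the same substitutions (at indices $n-1$ and $n$) into (\ref{s21}). Your bookkeeping of the index shift $r_{n+1}(t)=2\bt_{n+1}-(n+1)$, which produces the $-2n-1$ term, is exactly the one small point where care is needed, and you handled it correctly.
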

\begin{proof}
Substituting (\ref{s11}) and (\ref{s23}) into (\ref{s12}) and (\ref{s21}), we obtain (\ref{d11}) and (\ref{d12}), respectively.
\end{proof}
\begin{remark}
In the next subsection, we will see that the above discrete system is crucial to derive the large $n$ asymptotic expansions of $\alpha_n$ and $\bt_n$.
\end{remark}
We also have the Toda equations for the recurrence coefficients in the following theorem.
\begin{theorem}\label{th2}
The recurrence coefficients satisfy the coupled differential-difference equations
\begin{subequations}
\be\label{dd1}
\alpha_{n}'(t)=2(\beta_{n}-\beta_{n+1})+1,
\ee
\be\label{dd2}
\beta_{n}'(t)=2\beta_{n}(\alpha_{n-1}-\alpha_{n}).
\ee
\end{subequations}
\end{theorem}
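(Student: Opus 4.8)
The plan is to combine the general Toda-type evolution equations obtained by differentiating the orthogonality relations with the two finite-$n$ algebraic identities already extracted from the compatibility conditions. Concretely, equations (\ref{eq2}) and (\ref{eq3}) express $\alpha_n'(t)$ and $\beta_n'(t)$ in terms of the auxiliary quantities $r_n(t)$ and $R_n(t)$, while (\ref{s11}) and (\ref{s23}) give the closed-form reductions $R_n(t)=2\alpha_n$ and $r_n(t)=2\beta_n-n$ valid in the Gaussian case. Substituting the latter pair into the former pair should produce (\ref{dd1}) and (\ref{dd2}) immediately.

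For (\ref{dd1}), I would start from $\alpha_n'(t)=r_n(t)-r_{n+1}(t)$ and replace $r_n(t)$ by $2\beta_n-n$ and $r_{n+1}(t)$ by $2\beta_{n+1}-(n+1)$; collecting terms yields $\alpha_n'(t)=2\beta_n-2\beta_{n+1}+1$, which is exactly (\ref{dd1}). For (\ref{dd2}), I would start from $\beta_n'(t)=\beta_n\bigl(R_{n-1}(t)-R_n(t)\bigr)$ and substitute $R_n(t)=2\alpha_n$ and $R_{n-1}(t)=2\alpha_{n-1}$, giving $\beta_n'(t)=2\beta_n(\alpha_{n-1}-\alpha_n)$, which is (\ref{dd2}).

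There is no genuine obstacle here: once the finite-$n$ identities (\ref{s11}) and (\ref{s23}) are available, the derivation is a one-line substitution into the already-established relations (\ref{eq2}) and (\ref{eq3}). The only place requiring a little care is the index bookkeeping in $r_{n+1}(t)$ — correctly shifting the explicit term $-n$ to $-(n+1)$ — since this is precisely what generates the additive constant $+1$ in (\ref{dd1}), the characteristic signature of the Toda flow. As an optional sanity check one could verify that (\ref{dd1})--(\ref{dd2}) are consistent with differentiating the discrete system (\ref{ds}) with respect to $t$, but this is not needed for the proof itself.
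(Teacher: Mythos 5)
Your proof is correct and is exactly the paper's argument: substitute $r_n(t)=2\beta_n-n$ from (\ref{s23}) into (\ref{eq2}) to get (\ref{dd1}), and $R_n(t)=2\alpha_n$ from (\ref{s11}) into (\ref{eq3}) to get (\ref{dd2}). The index shift producing the $+1$ is handled correctly; nothing further is needed.
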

\begin{proof}
Substituting (\ref{s23}) into (\ref{eq2}), we have (\ref{dd1}). Substituting (\ref{s11}) into (\ref{eq3}), we obtain (\ref{dd2}). See also \cite{Min2019}.
\end{proof}
\begin{remark}
The combination of Theorems \ref{th1} and \ref{th2} will show that $\al_n(t)$ satisfies the Painlev\'{e} IV equation under a simple transformation. See Theorem \ref{thm1} below.
\end{remark}

Next, we give the relation between the logarithmic derivative of the Hankel determinant $\mathcal{D}_n(t)$ and the recurrence coefficients, which will be used to derive the large $n$ asymptotics of $\frac{d}{d t} \ln \mathcal{D}_n(t)$ in the next subsection.
\begin{theorem}
The logarithmic derivative of the Hankel determinant, $\frac{d}{d t} \ln \mathcal{D}_n(t)$, has the following two alternative expressions:
\begin{align}
\frac{d}{d t} \ln \mathcal{D}_n(t)&=2\mathrm{p}(n,t) \label{ex1}\\
&=2t(2\bt_n-n)-4\bt_n(\al_{n-1}+\al_n). \label{ex2}
\end{align}
\end{theorem}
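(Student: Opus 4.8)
The plan is to derive \eqref{ex1} directly from the general identity \eqref{eq4}, and then obtain \eqref{ex2} by substituting the finite-$n$ relations from the ladder operator analysis into the right-hand side of \eqref{ex1}. For the first expression, I would start from \eqref{eq4}, which reads $\frac{d}{dt}\ln\mathcal{D}_n(t)=-\sum_{j=0}^{n-1}R_j(t)$, and use the $m=1$ identity \eqref{s11}, $R_j(t)=2\al_j$. This gives $\frac{d}{dt}\ln\mathcal{D}_n(t)=-2\sum_{j=0}^{n-1}\al_j(t)$, and then \eqref{sum} (namely $\sum_{j=0}^{n-1}\al_j(t)=-\mathrm{p}(n,t)$) immediately yields $\frac{d}{dt}\ln\mathcal{D}_n(t)=2\mathrm{p}(n,t)$, which is \eqref{ex1}.

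For the second expression \eqref{ex2}, the natural route is to express $\mathrm{p}(n,t)$ in closed form using the auxiliary quantities already introduced. From \eqref{s22a} we have $2tr_n(t)+\sum_{j=0}^{n-1}R_j(t)=2\beta_n(R_{n-1}(t)+R_n(t))$. Combining this with \eqref{eq4} and \eqref{ex1} gives $-2\mathrm{p}(n,t)=\sum_{j=0}^{n-1}R_j(t)=2\beta_n(R_{n-1}(t)+R_n(t))-2tr_n(t)$, so that $\mathrm{p}(n,t)=tr_n(t)-\beta_n(R_{n-1}(t)+R_n(t))$. Now substitute $R_n(t)=2\al_n$ (from \eqref{s11}), $R_{n-1}(t)=2\al_{n-1}$, and $r_n(t)=2\beta_n-n$ (from \eqref{s23}) to obtain $\mathrm{p}(n,t)=t(2\beta_n-n)-2\beta_n(\al_{n-1}+\al_n)$. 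Multiplying by $2$ and comparing with \eqref{ex1} produces exactly \eqref{ex2}.

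I expect the argument to be essentially bookkeeping once the ingredients are lined up; there is no real analytic obstacle. The only point requiring a little care is to make sure the sum $\sum_{j=0}^{n-1}R_j(t)$ is handled consistently in the two places it appears — once via \eqref{eq4} (relating it to the Hankel determinant) and once via \eqref{s22a} (relating it to $\beta_n$, $R_{n\pm1}$, $r_n$) — and that the index shifts in $R_{n-1}$ and $r_n$ match the conventions fixed by \eqref{s11} and \eqref{s23}. Verifying that these two evaluations of the same sum are compatible is what pins down the identity, so I would present both substitutions explicitly and then read off \eqref{ex1} and \eqref{ex2} as the two resulting forms of $\frac{d}{dt}\ln\mathcal{D}_n(t)$.
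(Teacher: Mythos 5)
Your proposal is correct and follows essentially the same route as the paper: \eqref{ex1} is obtained from \eqref{eq4}, \eqref{s11} and \eqref{sum}, and \eqref{ex2} comes from eliminating $\sum_{j=0}^{n-1}R_j(t)$ between \eqref{eq4} and \eqref{s22a} and then substituting \eqref{s11} and \eqref{s23}. Routing the second computation through $\mathrm{p}(n,t)$ is only a cosmetic repackaging of the paper's argument.
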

\begin{proof}
From (\ref{eq4}), (\ref{s11}) and (\ref{sum}), we get (\ref{ex1}).
Eliminating $\sum_{j=0}^{n-1}R_{j}(t)$ from (\ref{eq4}) and (\ref{s22a}), we have
\begin{align}
\frac{d}{d t} \ln \mathcal{D}_n(t)&=2tr_n(t)-2\bt_n(R_{n-1}(t)+R_n(t))\no\\
&=2t(2\bt_n-n)-4\bt_n(\al_{n-1}+\al_n),\no
\end{align}
where use has been made of (\ref{s23}) and (\ref{s11}).
\end{proof}
In the end, we show the relation between the recurrence coefficients, the logarithmic derivative of the Hankel determinant $\mathcal{D}_n(t)$ and the Painlev\'{e} equations.
\begin{theorem}\label{thm1}
Let $y(t):=2\al_n(-t)$, then $y(t)$ satisfies the fourth Painlev\'{e} equation \cite{Gromak}
$$
y''(t)=\frac{\left(y'(t)\right)^2}{2y(t)}+\frac{3}{2}y^3(t)+4t y^2(t)+2\left(t^2-\al\right)y(t)+\frac{\bt}{y(t)},
$$
with parameters $\al=2n+1,\; \bt=0$.
The logarithmic derivative of the Hankel determinant, $\s_n(t):=\frac{d}{d t} \ln \mathcal{D}_n(t)$, satisfies the Jimbo-Miwa-Okamoto $\s$-form of the Painlev\'{e} IV equation \cite{Jimbo1981}
$$
\left(\s_n''(t)\right)^2=4\left(t\s_n'(t)-\s_n(t)\right)^2-4\left(\s_n'(t)+\nu_0\right)\left(\s_n'(t)+\nu_1\right)\left(\s_n'(t)+\nu_2\right),
$$
with parameters $\nu_0=\nu_1=0,\; \nu_2=2n$.
\end{theorem}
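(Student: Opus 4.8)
The strategy is to derive both Painlevé statements by combining the discrete system of Theorem \ref{th1} with the Toda equations of Theorem \ref{th2}, eliminating variables until a single closed second-order ODE for the relevant quantity emerges.

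For the $\alpha_n$ part: I would start from the Toda equation \eqref{dd1}, namely $\alpha_n'(t) = 2(\beta_n - \beta_{n+1}) + 1$, which expresses the combination $\beta_n - \beta_{n+1}$ in terms of $\alpha_n'$. Pairing this with \eqref{d11}, written for index $n$ and for index $n+1$ (so that $2\beta_n + 2\beta_{n+1} - 2n-1 = 2\alpha_n(t-\alpha_n)$ and the shifted version), lets me solve for $\beta_n$ and $\beta_{n+1}$ individually as rational expressions in $\alpha_{n-1}, \alpha_n, \alpha_{n+1}$ and $t$ — or, better, express $\beta_n$ directly in terms of $\alpha_n, \alpha_n', t, n$. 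Substituting that expression for $\beta_n$ (and $\beta_{n-1}$) into \eqref{d12}, $(2\beta_n - n)^2 = 4\alpha_{n-1}\alpha_n\beta_n$, would still leave $\alpha_{n\pm1}$ present; the cleanest route is instead to use \eqref{dd2}, $\beta_n'(t) = 2\beta_n(\alpha_{n-1}-\alpha_n)$, to trade the shifted $\alpha$'s for derivatives. Concretely: from \eqref{d11} I get $2\beta_n - n = \alpha_n(t-\alpha_n) + (\text{something involving } \beta_{n+1}-\beta_n) = \alpha_n(t-\alpha_n) - \tfrac12\alpha_n' + \tfrac12$ after using \eqref{dd1}; this gives $\beta_n$ explicitly in terms of $\alpha_n, \alpha_n', t, n$. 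Then differentiating this relation in $t$ and comparing with \eqref{dd2} (which, after substituting the same formula for $\beta_n$, yields $\alpha_{n-1}$ in terms of $\alpha_n, \alpha_n', \alpha_n'', t, n$) and finally feeding everything into \eqref{d12} produces a second-order ODE for $\alpha_n(t)$. The substitution $y(t) := 2\alpha_n(-t)$ should convert it to the stated $P_{\mathrm{IV}}$ form; I would verify the parameter assignment $\alpha = 2n+1$, $\beta = 0$ by matching the coefficient of $y(t)$ and noting the absence of a $1/y$ term.

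For the $\sigma$-form part: I would take $\sigma_n(t) = \frac{d}{dt}\ln\mathcal{D}_n(t)$ and use the two expressions \eqref{ex1} and \eqref{ex2}. From \eqref{ex1}, $\sigma_n = 2\mathrm{p}(n,t)$, together with \eqref{eq1} giving $\mathrm{p}'(n,t) = r_n(t)$, so $\sigma_n'(t) = 2r_n(t) = 2(2\beta_n - n)$ by \eqref{s23}; hence $\sigma_n' + 2n = 4\beta_n$ and $\sigma_n' = 2r_n$. Differentiating again, $\sigma_n''(t) = 4\beta_n'(t) = 8\beta_n(\alpha_{n-1}-\alpha_n)$ via \eqref{dd2}. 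Meanwhile \eqref{ex2} reads $\sigma_n = 2t(2\beta_n - n) - 4\beta_n(\alpha_{n-1}+\alpha_n) = t\sigma_n' - 4\beta_n(\alpha_{n-1}+\alpha_n)$, so $t\sigma_n' - \sigma_n = 4\beta_n(\alpha_{n-1}+\alpha_n)$. Now I have $\sigma_n'' = 8\beta_n(\alpha_{n-1}-\alpha_n)$ and $t\sigma_n' - \sigma_n = 4\beta_n(\alpha_{n-1}+\alpha_n)$, so $(\sigma_n'')^2$ and $(t\sigma_n'-\sigma_n)^2$ differ by $64\beta_n^2\cdot 4\alpha_{n-1}\alpha_n = 256\beta_n^2\alpha_{n-1}\alpha_n$, which by \eqref{d12} equals $64\beta_n(2\beta_n-n)^2 = 64\beta_n\,r_n^2 \cdot$ — using $2\beta_n - n = r_n = \tfrac12\sigma_n'$ and $4\beta_n = \sigma_n' + 2n$, this becomes $16(\sigma_n'+2n)\cdot\tfrac14(\sigma_n')^2 = 4(\sigma_n')^2(\sigma_n'+2n)$. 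Thus $4(t\sigma_n'-\sigma_n)^2 - (\sigma_n'')^2 = 4(\sigma_n')^2(\sigma_n'+2n)$, which matches the stated form with $\nu_0=\nu_1=0$, $\nu_2 = 2n$.

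The main obstacle is the first part: carefully eliminating the shifted recurrence coefficients $\alpha_{n\pm 1}$ and $\beta_{n\pm1}$ to obtain a bona fide ODE (rather than a differential-difference equation) in $\alpha_n$ alone, and then keeping track of signs and factors of $2$ through the scaling $y = 2\alpha_n(-t)$ so that the $P_{\mathrm{IV}}$ parameters come out exactly as $(\alpha,\beta) = (2n+1,0)$. The $\sigma$-form, by contrast, is essentially an algebraic identity among $\sigma_n', \sigma_n'', t\sigma_n'-\sigma_n$ once \eqref{ex1}, \eqref{ex2}, \eqref{s23}, \eqref{d12} and the Toda equations are in hand, and I expect it to fall out cleanly as sketched above.
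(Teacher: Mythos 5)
Your proposal is correct, and it supplies the actual derivation that the paper omits: the paper's ``proof'' of Theorem \ref{thm1} is simply a citation to \cite{Min2019}, and the elimination scheme you describe is precisely the one carried out there. The $\sigma$-form part of your argument is complete and checks out exactly as written: $\sigma_n'=2r_n=2(2\beta_n-n)$, $\sigma_n''=8\beta_n(\alpha_{n-1}-\alpha_n)$, $t\sigma_n'-\sigma_n=4\beta_n(\alpha_{n-1}+\alpha_n)$, and (\ref{d12}) converts the cross term $256\beta_n^2\alpha_{n-1}\alpha_n$ into $4(\sigma_n')^2(\sigma_n'+2n)$, giving $\nu_0=\nu_1=0$, $\nu_2=2n$. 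For the $P_{\mathrm{IV}}$ part, your plan does go through: adding (\ref{d11}) to (\ref{dd1}) gives $\beta_n$ in terms of $\alpha_n,\alpha_n',t,n$; then (\ref{dd2}) gives $\alpha_{n-1}=\alpha_n+\beta_n'/(2\beta_n)$, and substituting both into (\ref{d12}) yields $u''=\frac{(u')^2}{2u}+6u^3-8tu^2+2t^2u-2(2n+1)u$ for $u=\alpha_n$, which under $y(t)=2u(-t)$ becomes the stated $P_{\mathrm{IV}}$ with $\al=2n+1$, $\bt=0$. One small slip to fix: your intermediate formula should read $2\beta_n-n=\alpha_n(t-\alpha_n)+\tfrac12\alpha_n'$ (no constant term); the expression $\alpha_n(t-\alpha_n)-\tfrac12\alpha_n'+\tfrac12$ that you wrote is actually $2\beta_{n+1}-n$, i.e.\ you flipped the sign of $\beta_n-\beta_{n+1}$. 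This does not affect the viability of the method, only the bookkeeping you already flagged as the delicate point.
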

\begin{proof}
See \cite{Min2019}.
\end{proof}

\subsection{Large $n$ Asymptotics of the Smallest Eigenvalue Distribution of the Gaussian Unitary Ensemble}
In this subsection, we consider the smallest eigenvalue distribution of the Gaussian unitary ensemble. This corresponds to $A=0, B=1$. Note that in this case $\mathcal{D}_n(t)=D_n(t)$.

Substituting (\ref{pt1}) into (\ref{sup2}), we obtain a quadratic equation satisfied by $b$,
$$
\frac{1}{8} \left(3 b^2+2 b t+3 t^2\right)=n.
$$
Since $b>t$ for large $n$, the above equation has a unique solution given by
$$
b=-\frac{t}{3}+\frac{2\sqrt{2(3n-t^2)}}{3}.
$$
It follows that
\be\label{e1}
\frac{t+b}{2}=\sqrt{\frac{2n}{3}}+\frac{t}{3}-\frac{ t^2}{3 \sqrt{6n}}-\frac{ t^4}{36 \sqrt{6}\:n^{3/2}} -\frac{ t^6}{216 \sqrt{6}\:n^{5/2}}-\frac{5  t^8}{5184 \sqrt{6}\:n^{7/2}}+O(n^{-9 / 2}),
\ee
\be\label{e2}
\left(\frac{b-t}{4}\right)^2=\frac{n}{6}-\frac{t}{3} \sqrt{\frac{2n}{3}}+\frac{t^2}{18}+\frac{ t^3}{9 \sqrt{6n}}+\frac{ t^5}{108 \sqrt{6}\:n^{3/2}}+\frac{ t^7}{648 \sqrt{6}\:n^{5/2}}+\frac{5  t^9}{15552 \sqrt{6}\:n^{7/2}}+O(n^{-9/2}).
\ee
Multiplying by $\frac{1}{\sqrt{(b-x)(x-t)}}$ on both sides of (\ref{ie}) and integrating with respect
to $x$ from $t$ to $b$, and following the similar computations in \cite[p. 406--407]{Min2021}, we find the Lagrange multiplier $\mathcal{A}$ is
$$
\mathcal{A}=\frac{1}{\pi}\int_{t}^{b}\frac{\mathrm{v}_0(x)}{\sqrt{(b-x)(x-t)}}dx-2n\ln\frac{b-t}{4}=n-2 n \ln \frac{b-t}{4}=n-2 n \ln \left[\frac{1}{6} \left(\sqrt{2(3n-t^2)}-2 t\right)\right].
$$
The large $n$ series expansion reads
\begin{align}\label{a}
\mathcal{A}=&-n\ln n+ (1 + \ln 6) n+2t \sqrt{\frac{2n}{3}}+t^2+\frac{7t^3}{9} \sqrt{\frac{2}{3n}}+\frac{t^4}{2 n}+\frac{29  t^5}{30 \sqrt{6}\:n^{3/2}}+\frac{t^6}{3 n^2}\no\\ &+\frac{527  t^7}{756 \sqrt{6}\:n^{5/2}}+\frac{t^8}{4 n^3}+O(n^{-7/2}).
\end{align}

With these ingredients in hand, we are now ready to derive the asymptotic expansions of the recurrence coefficients as $n\rightarrow\infty$.
\begin{theorem}
For fixed $t\in\mathbb{R}$, the recurrence coefficients $\alpha_n$ and $\beta_{n}$ have the following large $n$ expansions:
\be\label{aa}
\alpha_n=\sqrt{\frac{2n}{3}}+\frac{2 t}{3}+\frac{t^2+3}{6 \sqrt{6n} }-\frac{t^4+6 t^2+15}{144 \sqrt{6}\: n^{3/2}}+\frac{t}{36 n^2}+\frac{t^6+9 t^4-117 t^2+81}{1728 \sqrt{6}\: n^{5/2}}+\frac{t (7 t^2-6)}{216 n^3}+O(n^{-7/2}),
\ee
\be\label{ab}
\beta_n=\frac{n}{6}-\frac{t\sqrt{n}}{3 \sqrt{6}}+\frac{t^2}{18}-\frac{t^3}{36 \sqrt{6n} }+\frac{1}{72 n}+\frac{t(t^4-51)}{864 \sqrt{6}\: n^{3/2}}+\frac{7 t^2}{288 n^2}-\frac{t^3 (t^4+549)}{10368 \sqrt{6}\: n^{5/2}}+O(n^{-3}).
\ee
\end{theorem}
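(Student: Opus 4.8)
The plan is to combine the \emph{exact} discrete system of Theorem~\ref{th1} with the leading-order information supplied by the Coulomb fluid analysis. First I would record that (\ref{al})--(\ref{beta}) together with the expansions (\ref{e1})--(\ref{e2}) give, for fixed $t$ as $n\to\infty$,
$$
\alpha_n=\sqrt{\tfrac{2n}{3}}+O(1),\qquad \beta_n=\tfrac{n}{6}+O(\sqrt n),
$$
with both magnitude and sign of the leading terms fixed by the Coulomb fluid picture (equivalently, the leading balance in (\ref{d11})--(\ref{d12}) forces $a_{-1}^2=2/3$ and $b_{-2}=1/6$, and $\alpha_n\to+\infty$, $\beta_n>0$ select the values). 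Since the leading behaviour proceeds in powers of $n^{1/2}$, I would posit the asymptotic ansatz
\begin{align*}
\alpha_n&\sim a_{-1}\sqrt n+a_0+a_1 n^{-1/2}+a_2 n^{-1}+\cdots,\\
\beta_n&\sim b_{-2}\,n+b_{-1}\sqrt n+b_0+b_1 n^{-1/2}+\cdots,
\end{align*}
with $a_{-1}=\sqrt{2/3}$, $b_{-2}=1/6$; the expansions (\ref{e1})--(\ref{e2}) and (\ref{a}) are exactly the Coulomb-fluid data that seed this computation.

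The substantive step is then algebraic: insert the ansatz into the two difference equations (\ref{d11}) and (\ref{d12}). The only delicate point is the occurrence of the shifted quantities $\beta_{n+1}$ and $\alpha_{n-1}$, which I would re-expand about $n$ via $(n\pm1)^{-k/2}=n^{-k/2}(1\pm n^{-1})^{-k/2}=n^{-k/2}\sum_{\ell\ge0}\binom{-k/2}{\ell}(\pm1)^\ell n^{-\ell}$, so that $\beta_{n+1}$ and $\alpha_{n-1}$ also become series in $n^{-1/2}$ whose coefficients are explicit linear combinations of the $a_j,b_j$. Equating the coefficients of like powers of $n$ in (\ref{d11}) and in (\ref{d12}) then yields, at each successive order, a $2\times2$ linear system for the next pair of unknown coefficients; the quadratic terms $\alpha_n^2$ and $\alpha_{n-1}\alpha_n\beta_n$ contribute, at that order, only products of coefficients already determined at lower orders, so the system is genuinely linear in the new unknowns. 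Solving these systems recursively and carrying the expansions to the orders displayed in (\ref{aa}) and (\ref{ab}) produces the claimed formulas; the first two orders may be cross-checked against (\ref{e1})--(\ref{e2}), and the whole expansion re-verified through the Toda equations (\ref{dd1})--(\ref{dd2}) by differentiating the ansatz in~$t$.

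I expect the principal difficulty to be bookkeeping rather than anything conceptual: the recursion rapidly generates lengthy polynomial-in-$t$ coefficients, and at each stage one must retain sufficiently many terms in the shift expansions of $\beta_{n+1}$ and $\alpha_{n-1}$. The one genuine point of principle is that each $2\times2$ system be nonsingular, so that the whole expansion is uniquely pinned down by the leading data; this holds because the determinant of the relevant coefficient matrix is, at every order, the same nonzero constant $16\,a_{-1}(1+a_{-1}^2)$. Finally I would note that this same scheme --- Coulomb fluid for the leading behaviour, the discrete system for all the corrections --- is exactly what drives the heavier analogous computations for the quartic ($m=2$) and sextic ($m=3$) ensembles treated in Sections~3 and~4.
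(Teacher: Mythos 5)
Your proposal is correct and follows essentially the same route as the paper: fix the leading coefficients $a_{-1}=\sqrt{2/3}$, $b_{-2}=1/6$ from the Coulomb fluid data (\ref{al})--(\ref{beta}) with (\ref{e1})--(\ref{e2}), posit the half-integer-power ansatz (\ref{al1})--(\ref{be1}), substitute into the discrete system (\ref{ds}), and solve for the coefficients order by order. Your added remarks on re-expanding the shifted indices and on the nonsingularity of the order-by-order linear systems are sound supplements the paper leaves implicit (though the precise value you quote for the determinant should be rechecked --- at the first correction order it works out to $16a_{-1}$ --- the only point that matters is that it is nonzero).
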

\begin{proof}
From (\ref{al}), (\ref{beta}), (\ref{e1}), (\ref{e2}) and (\ref{a}), it can be seen that $\al_n$ and $\bt_{n}$ have the large $n$ expansions of the forms
\be\label{al1}
\alpha_n=a_{-1} \sqrt{n}+a_0+\sum_{j=1}^{\infty}\frac{a_{j}}{n^{j/2}},
\ee
\be\label{be1}
\bt_n=b_{-2}n+b_{-1}\sqrt{n}+b_0+\sum_{j=1}^{\infty}\frac{b_{j}}{n^{j/2}},
\ee
where
$$
a_{-1}=\sqrt{\frac{2}{3}},\qquad\qquad b_{-2}=\frac{1}{6}.
$$
Substituting (\ref{al1}) and (\ref{be1}) into the discrete system satisfied by the recurrence coefficients in (\ref{ds}) and taking a large $n$ limit, we obtain the expansion coefficients $a_j$ and $b_j$ recursively by equating the powers of $n$ on both sides. The first few terms are
$$
\begin{aligned}
&a_0=\frac{2 t}{3},\qquad b_{-1}=-\frac{t}{3 \sqrt{6}};\qquad  a_1=\frac{t^2+3}{6 \sqrt{6} },\qquad  b_0=\frac{t^2}{18};\\
&a_2=0,\qquad b_1=-\frac{t^3}{36 \sqrt{6} };\qquad a_3=-\frac{t^4+6 t^2+15}{144 \sqrt{6}},\qquad b_2=\frac{1}{72};\\
&a_4=\frac{t}{36 },\qquad b_3=\frac{t(t^4-51)}{864 \sqrt{6}};\qquad a_5=\frac{t^6+9 t^4-117 t^2+81}{1728 \sqrt{6}},\qquad b_4=\frac{7 t^2}{288 };\\
&a_6=\frac{t (7 t^2-6)}{216 },\qquad b_5=-\frac{t^3 (t^4+549)}{10368 \sqrt{6}}.
\end{aligned}
$$
This completes the proof.
\end{proof}
\begin{remark}
In recent years, the difference equation method has been applied to derive the large $n$ asymptotic expansions of the recurrence coefficients of many semi-classical orthogonal polynomials; see e.g. \cite{Clarkson1,Clarkson2,Deano,Min2021,Min2023}.
\end{remark}

Based on the large $n$ asymptotics of the recurrence coefficients, we proceed to derive the asymptotics of the logarithmic derivative of the Hankel determinant $D_n(t)$, the Hankel determinant and the probability $\mathbb{P}(n,t)$ in the following.
\begin{theorem}
The logarithmic derivative of the Hankel determinant, $\frac{d}{d t} \ln D_n(t)$, has the following expansion as $n\rightarrow\infty$:
\be\label{dd}
\frac{d}{d t} \ln D_n(t)=-\frac{4}{3}  \sqrt{\frac{2}{3}}\: n^{3/2}-\frac{4n t}{3} -\frac{t^2}{3} \sqrt{\frac{2n}{3}}+\frac{2 t^3}{27}-\frac{ t^4+3}{36 \sqrt{6n}}+\frac{t}{18 n}   +\frac{ t^2 (t^4-153)}{1296 \sqrt{6}\:n^{3/2}}+\frac{7 t^3}{216 n^2}+O(n^{-5/2}).
\ee
\end{theorem}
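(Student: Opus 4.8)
The plan is to use the exact, finite-$n$ identity (\ref{ex2}). Since here $A=0$ and $B=1$ we have $\mathcal{D}_n(t)=D_n(t)$, and (\ref{ex2}) reads
$$
\frac{d}{dt}\ln D_n(t)=2t(2\bt_n-n)-4\bt_n(\al_{n-1}+\al_n).
$$
The right-hand side is a finite expression in the recurrence coefficients whose large $n$ behaviour is supplied by (\ref{aa}) and (\ref{ab}), so the proof reduces to substituting those expansions and collecting powers of $n^{1/2}$. The only quantity not directly given by (\ref{aa})--(\ref{ab}) is $\al_{n-1}$, so the first step is to re-expand it about $n$.

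To do this I would replace $n$ by $n-1$ in (\ref{aa}) and expand the resulting fractional powers through the binomial series, using $\sqrt{n-1}=\sqrt{n}\bigl(1-\tfrac1{2n}-\tfrac1{8n^2}-\cdots\bigr)$, $(n-1)^{-1/2}=n^{-1/2}\bigl(1+\tfrac1{2n}+\cdots\bigr)$, $(n-1)^{-3/2}=n^{-3/2}\bigl(1+\tfrac3{2n}+\cdots\bigr)$, and so on. Collecting terms gives $\al_{n-1}$ as a series in $n^{-1/2}$ to the same order as (\ref{aa}); adding $\al_n$ then yields $\al_{n-1}+\al_n$ as such a series. Next I would multiply this by the expansion (\ref{ab}) of $\bt_n$ --- which starts at order $n$ --- to obtain $\bt_n(\al_{n-1}+\al_n)$ as a series starting at order $n^{3/2}$, form $2t(2\bt_n-n)$, and insert both into the identity above. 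Equating coefficients of $n^{3/2},n,n^{1/2},1,n^{-1/2},n^{-1},n^{-3/2},n^{-2}$ produces (\ref{dd}); one sees, for instance, that the leading term $-\tfrac43\sqrt{2/3}\,n^{3/2}$ comes entirely from $-4\bt_n\al_n$, that the $n$-order contributions of $-4\bt_n(\al_{n-1}+\al_n)$ cancel so that the $n$-term $-\tfrac43 nt$ comes from $2t(2\bt_n-n)$, and similarly at the lower orders.

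The only real obstacle is the bookkeeping: one must carry enough terms in each intermediate series that nothing is dropped at orders $n^{3/2}$ down to $n^{-2}$. Here the available precision is exactly what is needed --- (\ref{ab}) is known through the $n^{-5/2}$ term with error $O(n^{-3})$ and (\ref{aa}) through the $n^{-3}$ term with error $O(n^{-7/2})$, so that multiplication by the order-$n$ factor $\bt_n$ (resp.\ the order-$\sqrt{n}$ factor $\al_{n-1}+\al_n$) keeps $\bt_n(\al_{n-1}+\al_n)$, and hence $\frac{d}{dt}\ln D_n(t)$, determined to order $n^{-2}$ with error $O(n^{-5/2})$, matching (\ref{dd}). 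As an independent check one can instead use (\ref{ex1}) together with (\ref{sum}), which give $\frac{d}{dt}\ln D_n(t)=-2\sum_{j=0}^{n-1}\al_j(t)$; however, extracting the asymptotics from this sum requires the Euler--Maclaurin formula together with the handling of the boundary contributions of the small-$j$ terms, so the closed identity (\ref{ex2}) is the cleaner route.
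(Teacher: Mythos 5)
Your proposal is correct and follows essentially the same route as the paper: both start from the exact identity (\ref{ex2}), substitute the expansions (\ref{aa}) and (\ref{ab}) (re-expanding $\al_{n-1}$ about $n$), and collect powers of $n^{1/2}$; your accounting of the available precision — error $O(n^{-5/2})$ after multiplying by the order-$n$ factor $\bt_n$ — is also accurate.
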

\begin{proof}
From (\ref{ex2}) we have
$$
\frac{d}{d t} \ln D_n(t)=2t(2\bt_n-n)-4\bt_n(\al_{n-1}+\al_n).
$$
Substituting (\ref{aa}) and (\ref{ab}) into the above, we obtain (\ref{dd}) by taking a large $n$ limit.
\end{proof}
\begin{remark}
From (\ref{ex1}) we get
$$
\frac{d}{d t} \ln D_n(t)=2\mathrm{p}(n,t),
$$
where $\mathrm{p}(n,t)$ is the sub-leading coefficient of the monic orthogonal polynomials.
So, the large $n$ asymptotic expansion of $\mathrm{p}(n,t)$ is given by
$$
\mathrm{p}(n,t)=-\frac{2}{3}  \sqrt{\frac{2}{3}}\: n^{3/2}-\frac{2n t}{3} -\frac{t^2}{3} \sqrt{\frac{n}{6}}+\frac{t^3}{27}-\frac{ t^4+3}{72 \sqrt{6n}}+\frac{t}{36 n}   +\frac{ t^2 (t^4-153)}{2592 \sqrt{6}\:n^{3/2}}+\frac{7 t^3}{432 n^2}+O(n^{-5/2}).
$$
\end{remark}

\begin{theorem}
The Hankel determinant $D_n(t)$ has the following asymptotic expansion as $n\rightarrow\infty$:
\begin{align}\label{dnta}
\ln D_n(t)=&\:\frac{1}{2} n^2 \ln n-\left(\frac{3}{4}+\frac{\ln 6}{2}\right)n^2-\frac{4t}{3} \sqrt{\frac{2}{3}}\: n^{3/2}+\left(\ln(2\pi)-\frac{2}{3}t^2\right)n-\frac{  t^3}{9 }\sqrt{\frac{2n}{3}}-\frac{1}{6} \ln n\no\\
&+\frac{t^4}{54}+2\zeta'(-1)-\frac{\ln 2}{6}+\frac{\ln 3}{8}-\frac{t (t^4+15)}{180 \sqrt{6n} }+\frac{t^2}{36 n}+\frac{t^3 (t^4-357)}{9072 \sqrt{6}\: n^{3/2}}+\frac{140 t^4 -87}{17280 n^2}\no\\
&+O(n^{-5/2}),
\end{align}
where $\zeta'(\cdot)$ is the derivative of the Riemann zeta function.
\end{theorem}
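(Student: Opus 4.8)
The engine of the argument is the exact identity (\ref{ex1}), $\frac{d}{dt}\ln D_n(t)=2\mathrm{p}(n,t)$, together with the large-$n$ expansion (\ref{dd}) just obtained for its right-hand side. Since (\ref{dd}) holds for fixed $t$ and its coefficients are polynomials in $t$, the first step is simply to integrate (\ref{dd}) term by term in $t$. This reproduces \emph{all} of the $t$-dependent terms of the claimed expansion — the pieces carrying positive powers of $t$, namely $-\frac{4t}{3}\sqrt{\frac23}\,n^{3/2}$, $-\frac23t^2n$, $-\frac{t^3}{9}\sqrt{\frac{2n}{3}}$, $\frac{t^4}{54}$, $-\frac{t(t^4+15)}{180\sqrt{6n}}$, $\frac{t^2}{36n}$, $\frac{t^3(t^4-357)}{9072\sqrt6\,n^{3/2}}$ and $\frac{140t^4}{17280\,n^2}$ — up to an additive ``constant of integration'' $C(n)$ that is independent of $t$ but may depend on $n$.

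The remaining task, which is the actual content of the theorem, is to evaluate $C(n)$ through order $n^{-2}$. I would obtain it by computing the $t$-independent part of $\ln D_n(t)$ by an independent route. The cleanest is to write $\ln D_n(t)=\ln\mathbb{P}(n,t)+\ln D_n(-\infty)$: for $m=1$ the denominator has the closed form $D_n(-\infty)=\pi^{n/2}2^{-n(n-1)/2}G(n+1)$ with $G$ the Barnes $G$-function, so its classical asymptotics (equivalently the expansion of \cite{MWC}) supply the $n^2\ln n$, $n^2$, $n$, $\ln n$ and $O(1)$ contributions, the last of which involves $\zeta'(-1)$. What is then left, the $t$-independent part of $\ln\mathbb{P}(n,t)$, is a large-deviation-type quantity — the $O(n^2)$ cost of confining all the eigenvalues to a half-line — whose expansion contributes a further $-\frac{\ln3}{2}n^2$, a further $-\frac1{12}\ln n$ and a further $\zeta'(-1)-\frac{\ln2}{6}+\frac{\ln3}{8}$ (so that the two copies of $\zeta'(-1)$ combine into the $2\zeta'(-1)$ of the statement), and this can be read off from recent results in the literature. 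Adding the two contributions gives $C(n)$, hence the claimed formula. As an internal check, the $n^2\ln n$ and $n^2$ parts of this $t$-independent piece also come out of Dyson's Coulomb fluid directly: with $\ln D_n(t)\sim -F[\sigma]$, $\partial F[\sigma]/\partial n=\mathcal{A}$ (eq.\ (\ref{pd})) and the expansion (\ref{a}) of the Lagrange multiplier, integration in $n$ gives $-F[\sigma]=\frac12n^2\ln n-(\frac34+\frac{\ln6}{2})n^2+\cdots$, matching the leading behaviour.

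A fully self-contained alternative for $C(n)$ is to start from the product formula $\ln D_n(t)=n\ln h_0(t)+\sum_{k=1}^{n-1}(n-k)\ln\beta_k(t)$, which follows from $\beta_k=h_k/h_{k-1}$ (see (\ref{be})) and (\ref{hankel}). Here $h_0(t)=\int_t^{\infty}\mathrm{e}^{-x^2}\,dx$ is explicit, and inserting the large-$k$ expansion (\ref{ab}) of $\beta_k(t)$ into an Euler--Maclaurin treatment of the sum yields $\ln D_n(t)$ to all orders; the finitely many small-$k$ summands, where (\ref{ab}) is not yet accurate, contribute only a $t$-independent amount and so affect nothing but $C(n)$.

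The main obstacle is precisely the determination of $C(n)$. Because the $t$-dependent terms produced by integrating (\ref{dd}) grow with $|t|$, one cannot legitimately send $t\to-\infty$ inside the fixed-$t$ expansion to identify $C(n)$ with $\ln D_n(-\infty)$; a genuinely separate evaluation of the $t$-independent part is unavoidable. The dominant terms $\frac12n^2\ln n$ and $-(\frac34+\frac{\ln6}{2})n^2$ emerge almost for free from the Coulomb fluid, but pinning down the $-\frac16\ln n$ coefficient and the numerical constant $2\zeta'(-1)-\frac{\ln2}{6}+\frac{\ln3}{8}$ requires the careful Euler--Maclaurin bookkeeping (or the precise literature input) indicated above.
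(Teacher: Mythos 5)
Your overall strategy is essentially the paper's: the paper also integrates (\ref{dd}) from $0$ to $t$ (obtaining (\ref{as1})) and also imports the order-$n$ and order-$1$ constants from the literature value of $\ln D_n(0)$ — which is exactly your $\ln D_n(-\infty)$ plus the $t$-independent gap-probability contribution; the numbers you quote are (\ref{as2}) minus (\ref{dnmi}). The genuine difference lies in how the $t$-independent part $C(n)$ is pinned down \emph{below} order $n^0$. The paper first posits, from the Coulomb-fluid free energy, the ansatz $F_n(t)=c_6 n^2\ln n+c_5\ln n+\sum_{j\le 4}c_j n^{j/2}$ for $F_n(t)=-\ln D_n(t)$, and then feeds it into the exact second-difference identity $\ln\beta_n=2F_n(t)-F_{n+1}(t)-F_{n-1}(t)$ together with the already-established expansion (\ref{ab}) of $\beta_n$; this determines every coefficient except $c_2$ (order $n$) and $c_0$ (order $1$), which are precisely the two that the second difference annihilates and precisely the two supplied by the literature input. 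Your Euler--Maclaurin treatment of $\ln D_n(t)=n\ln h_0(t)+\sum_{k}(n-k)\ln\beta_k(t)$ is the same computation in summed form and would work, with the same two undetermined constants.

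Two points need repair. First, your primary route (integrate (\ref{dd}) in $t$, then take $C(n)$ from the literature) cannot by itself yield the theorem as stated: the literature asymptotics of $\ln D_n(0)$, equivalently of the gap probability at $0$, are only available to $o(1)$, so the $t$-independent coefficient $-\tfrac{87}{17280}$ at order $n^{-2}$ — and the fact that the $t$-independent parts at orders $n^{-1/2}$, $n^{-1}$, $n^{-3/2}$ vanish — is out of reach this way. You must actually run the difference-equation (or Euler--Maclaurin) argument to capture those terms, as the paper does. Second, your justification for the Euler--Maclaurin backup is off: the finitely many small-$k$ summands are not $t$-independent, since $\beta_k(t)$ depends on $t$ for every $k$. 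The correct statement is that their total contribution $\sum_{k\le K}(n-k)\varepsilon_k(t)$ is exactly of the form $a(t)\,n+b(t)$, so it contaminates only the order-$n$ and order-$1$ coefficients — whose $t$-dependence is already fixed by integrating (\ref{dd}) and whose $t$-independent parts are exactly the two constants you take from the literature. With that correction your argument closes and is equivalent to the paper's.
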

\begin{proof}
From (\ref{pd}) and (\ref{a}), we see that the free energy $F[\s]$ has the large $n$ asymptotic expansion
\begin{align}
F[\s]=&-\frac{1}{2} n^2 \ln n+\left(\frac{3}{4}+\frac{\ln 6}{2}\right)n^2+\frac{4t}{3} \sqrt{\frac{2}{3}}\: n^{3/2}+t^2 n +\frac{14  t^3}{9 }\sqrt{\frac{2n}{3}}+\frac{t^4}{2}  \ln n+C\no\\
&-\frac{29  t^5}{15 \sqrt{6n}}-\frac{t^6}{3 n}-\frac{527  t^7}{1134 \sqrt{6}\:n^{3/2}}-\frac{t^8}{8 n^2}+O(n^{-5/2}),\no
\end{align}
where $C$ is an undetermined constant independent of $n$.
Let
$$
F_n(t):=-\ln D_n(t)
$$
be the ``free energy''. It was shown in \cite{ChenIsmail} that $F_n(t)$ is approximated by the free energy $F[\s]$ in (\ref{fe}) for sufficiently large $n$. It was also pointed out in \cite{ChenIsmail1998} that the approximation is very accurate and effective. So,
we have the following large $n$ expansion form for $F_n(t)$:
\be\label{fna}
F_n(t)=c_{6}n^2\ln n+c_5  \ln n+\sum_{j=-\infty}^{4}c_{j}\:n^{j/2},
\ee
where $c_j,\;j=6, 5, 4, \ldots$ are the expansion coefficients to be determined.

Using (\ref{bd}) we have
\be\label{dc}
\ln\bt_n=2F_n(t)-F_{n+1}(t)-F_{n-1}(t).
\ee
Substituting (\ref{ab}) and (\ref{fna}) into equation (\ref{dc}) and letting $n\rightarrow\infty$, we obtain the expansion coefficients $c_j$ (except $c_2$ and $c_0$) by equating coefficients of powers of $n$ on both sides.
The large $n$ asymptotic expansion for $F_n(t)$ reads
\begin{align}
F_n(t)=&-\frac{1}{2} n^2 \ln n+\left(\frac{3}{4}+\frac{\ln 6}{2}\right)n^2+\frac{4t}{3} \sqrt{\frac{2}{3}}\: n^{3/2}+c_2n+\frac{  t^3}{9 }\sqrt{\frac{2n}{3}}+\frac{1}{6} \ln n+c_0\no\\
&+\frac{t (t^4+15)}{180 \sqrt{6n} }-\frac{t^2}{36 n}-\frac{t^3 (t^4-357)}{9072 \sqrt{6}\: n^{3/2}}-\frac{140 t^4 -87}{17280 n^2}+O(n^{-5/2}),\no
\end{align}
where $c_2$ and $c_0$ are still undetermined. It follows that
\begin{align}\label{dnt1}
\ln D_n(t)=&\frac{1}{2} n^2 \ln n-\left(\frac{3}{4}+\frac{\ln 6}{2}\right)n^2-\frac{4t}{3} \sqrt{\frac{2}{3}}\: n^{3/2}-c_2n-\frac{  t^3}{9 }\sqrt{\frac{2n}{3}}-\frac{1}{6} \ln n-c_0\no\\
&-\frac{t (t^4+15)}{180 \sqrt{6n} }+\frac{t^2}{36 n}+\frac{t^3 (t^4-357)}{9072 \sqrt{6}\: n^{3/2}}+\frac{140 t^4 -87}{17280 n^2}+O(n^{-5/2}).
\end{align}

We proceed to determine the two constants $c_2$ and $c_0$.
Using (\ref{dd}) and taking an integral from $0$ to $t$, we have
\be\label{as1}
\ln \frac{D_n(t)}{D_n(0)}=-\frac{4t}{3}\sqrt{\frac{2}{3}}\:  n^{3/2}-\frac{2}{3}t^2  n-\frac{ t^3}{9}\sqrt{\frac{2n}{3}}+\frac{t^4}{54}-\frac{t (t^4+15)}{180 \sqrt{6n}}+\frac{t^2}{36 n}+\frac{ t^3 (t^4-357)}{9072 \sqrt{6}\:n^{3/2}}+\frac{7 t^4}{864 n^2}+O(n^{-5/2}).
\ee
Based on the results of Dea$\mathrm{\tilde{n}}$o and Simm \cite{Deano}, it was shown in \cite[p. 15283]{Min2023} ($\la=0$) that
\be\label{as2}
\ln D_n(0)=\frac{1}{2} n^2 \ln n-\left(\frac{3}{4}+\frac{\ln 6}{2}\right)n^2+n\ln(2\pi)-\frac{1}{6} \ln n+2\zeta'(-1)-\frac{\ln 2}{6}+\frac{\ln 3}{8}+O(n^{-1}).
\ee
The combination of (\ref{as1}) and (\ref{as2}) gives
\be\label{c2}
-c_2=\ln(2\pi)-\frac{2}{3}t^2,
\ee
\be\label{c0}
-c_0=\frac{t^4}{54}+2\zeta'(-1)-\frac{\ln 2}{6}+\frac{\ln 3}{8}.
\ee
Substituting (\ref{c2}) and (\ref{c0}) into (\ref{dnt1}), we finally arrive at (\ref{dnta}).
\end{proof}
\begin{remark}\label{re}
We provide an alternative derivation of the asymptotics of $D_n(0)$ in (\ref{as2}) by using a recent result of Claeys, Krasovsky and Minakov \cite{CKM}.
Let
$$
\widetilde{D}_n^{(1)}(0):=\det\left(\int_0^{\infty} x^{i+j} \mathrm{e}^{-nx^2} d x\right)_{i, j=0}^{n-1}.
$$
From the identity
$$
\det\left(\int_0^{\infty} x^{i+j} \mathrm{e}^{-nx^2} d x\right)_{i, j=0}^{n-1}=\frac{1}{n!} \int_{(0, \infty)^n} \prod_{1 \leq i<j \leq n}\left(x_j-x_i\right)^2 \prod_{k=1}^n \mathrm{e}^{-nx_k^2} d x_k
$$
and by a simple change of variables, we find
\be\label{dn01}
\widetilde{D}_n^{(1)}(0)=n^{-n^2/2}D_n(0).
\ee
From Proposition 1.10 and (1.22) in \cite{CKM} (see also p. 885--887), we take the special values $\al=0,\;\bt=4$ and let $\mu\rightarrow0^{+}$ to match our case. After some elaborate computations, we obtain
\be\label{dn02}
\ln \widetilde{D}_n^{(1)}(0)=-\left(\frac{3}{4}+\frac{\ln 6}{2}\right)n^2+n\ln(2\pi)-\frac{1}{6} \ln n+2\zeta'(-1)-\frac{\ln 2}{6}+\frac{\ln 3}{8}+o(1).
\ee
The combination of (\ref{dn01}) and (\ref{dn02}) gives
$$
\ln D_n(0)=\frac{1}{2} n^2 \ln n-\left(\frac{3}{4}+\frac{\ln 6}{2}\right)n^2+n\ln(2\pi)-\frac{1}{6} \ln n+2\zeta'(-1)-\frac{\ln 2}{6}+\frac{\ln 3}{8}+o(1),
$$
which coincides with (\ref{as2}).
\end{remark}

In the end, we obtain the following result on the large $n$ asymptotics of the smallest eigenvalue distribution of the Gaussian unitary ensemble.
\begin{theorem}
For fixed $t\in\mathbb{R}$, the probability $\mathbb{P}(n, t)$ in (\ref{pnt1}) for the Gaussian unitary ensemble has the large $n$ asymptotics
\begin{align}
\ln \mathbb{P}(n, t)=&-\frac{n^2\ln 3}{2}-\frac{4t}{3} \sqrt{\frac{2}{3}}\: n^{3/2}-\frac{2 nt^2}{3}-\frac{  t^3}{9 }\sqrt{\frac{2n}{3}}-\frac{1}{12} \ln n+\frac{t^4}{54}+\zeta'(-1)-\frac{\ln 2}{6}+\frac{\ln 3}{8}\no\\[8pt]
&-\frac{t (t^4+15)}{180 \sqrt{6n} }+\frac{t^2}{36 n}+\frac{t^3 (t^4-357)}{9072 \sqrt{6}\: n^{3/2}}+\frac{28 t^4 -3}{3456 n^2}+O(n^{-5/2}),\no
\end{align}
where $\zeta'(\cdot)$ is the derivative of the Riemann zeta function.
\end{theorem}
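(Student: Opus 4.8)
The plan is to combine the relation $\mathbb{P}(n,t) = D_n(t)/D_n(-\infty)$ from \eqref{pnt} with the large $n$ expansion of $\ln D_n(t)$ already established in \eqref{dnta} and the known asymptotics of the denominator $D_n(-\infty) = D_n[w_0]$. Since $w_0(x) = \mathrm{e}^{-x^2}$ here, $D_n(-\infty)$ is the Hankel determinant for the Gaussian weight, and I would use the closed-form / classical asymptotic expansion of this quantity (as referenced via \cite{Mehta} and \cite{MWC}). The core of the argument is then simply
$$
\ln \mathbb{P}(n,t) = \ln D_n(t) - \ln D_n(-\infty),
$$
so the theorem reduces to subtracting two asymptotic series.

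First I would record the large $n$ asymptotic expansion of $\ln D_n(-\infty)$ for the Gaussian weight. This is the $t \to -\infty$ limit (or the jump-free case $B=0$, $A=1$) of the same family, and it is a standard computation: one gets an expansion of the form $\frac{1}{2}n^2\ln n - (\frac{3}{4} + \frac{\ln\pi}{2})n^2 + \cdots$ with a $\ln n$ term of order $-\frac{1}{12}$ and a constant involving $\zeta'(-1)$ and $\ln 2$. I would either quote this from \cite{MWC} or derive it quickly from the product formula $D_n[w_0] = \prod_{j=0}^{n-1} h_j(-\infty)$ together with the explicit value $h_j(-\infty) = \sqrt{\pi}\, j!/2^j$ for the Hermite weight, applying Stirling's formula and the Barnes $G$-function / Glaisher constant asymptotics to handle $\prod_{j=0}^{n-1} j!$. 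The key point is that this expansion has \emph{no} terms of order $n^{3/2}$, $n^{1/2}$, $n^{-1/2}$, etc.\ with $t$-dependence — it is a pure function of $n$ — so all the $t$-dependent fractional-power terms of $\ln\mathbb{P}(n,t)$ come entirely from \eqref{dnta}.

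Then I would subtract term by term. The $\frac{1}{2}n^2\ln n$ terms cancel. The $n^2$ terms combine: $-(\frac{3}{4} + \frac{\ln 6}{2}) - (-(\frac{3}{4} + \frac{\ln\pi}{2})) = -\frac{\ln 6}{2} + \frac{\ln\pi}{2}$; using $6 = 2\cdot 3$ and $\pi$, and matching against the claimed leading term $-\frac{n^2\ln 3}{2}$, I would check that the constant pieces in \eqref{dnta} and in $\ln D_n(-\infty)$ are normalized consistently (the $\ln(2\pi)n$ term in \eqref{dnta} should cancel against the corresponding linear term in $\ln D_n(-\infty)$, leaving only the $-\frac{2}{3}t^2 n$ piece). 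The $\ln n$ coefficient becomes $-\frac{1}{6} - (-\frac{1}{12}) = -\frac{1}{12}$, matching. The constant term becomes $\big(\frac{t^4}{54} + 2\zeta'(-1) - \frac{\ln 2}{6} + \frac{\ln 3}{8}\big) - \big(2\zeta'(-1) - \frac{\ln 2}{6} + \cdots\big)$; here I must be careful to track exactly which constants appear in $\ln D_n(-\infty)$ so that the surviving constant is $\frac{t^4}{54} + \zeta'(-1) - \frac{\ln 2}{6} + \frac{\ln 3}{8}$ as claimed. Finally, the $n^{-2}$ coefficient $\frac{140t^4 - 87}{17280}$ in \eqref{dnta} must combine with the $n^{-2}$ term of $\ln D_n(-\infty)$ to give $\frac{28t^4 - 3}{3456}$; since $\frac{140t^4-87}{17280} = \frac{140t^4-87}{17280}$ and $\frac{28t^4}{3456} = \frac{140t^4}{17280}$, this forces the $n^{-2}$ term of $\ln D_n(-\infty)$ to be $\frac{-87 + 15}{17280} = \frac{-72}{17280} = -\frac{1}{240}$ — a consistency check I would verify against the Hermite Barnes-$G$ expansion.

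The main obstacle is not conceptual but bookkeeping: getting the \emph{exact} constant and the exact $n^{-2}$ coefficient in the expansion of $\ln D_n(-\infty)$ right, since every fractional-power $t$-term passes through unchanged but the pure-$n$ terms (the $n^2$ constant, the $n$ constant $\ln(2\pi)$, the $\ln n$ coefficient, the additive constant, and the $n^{-1}, n^{-2}$ coefficients) all depend delicately on the Barnes $G$-function / Glaisher–Kinkelin constant asymptotics and on correctly normalizing the Gaussian moments. Once that expansion is pinned down — most safely by citing \cite{MWC} or \cite{Mehta} — the theorem follows by termwise subtraction with no further analysis.
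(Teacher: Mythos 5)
Your proposal is correct and follows essentially the same route as the paper: write $\ln\mathbb{P}(n,t)=\ln D_n(t)-\ln D_n(-\infty)$, take $\ln D_n(t)$ from (\ref{dnta}), and expand the closed form $D_n(-\infty)=(2\pi)^{n/2}2^{-n^2/2}G(n+1)$ (equivalently your product $\prod_{j=0}^{n-1}\sqrt{\pi}\,j!/2^{j}$) via the Barnes $G$-function asymptotics, then subtract termwise. The only corrections to your bookkeeping placeholders are that the $n^2$ coefficient of $\ln D_n(-\infty)$ is $-\bigl(\tfrac{3}{4}+\tfrac{\ln 2}{2}\bigr)$ rather than $-\bigl(\tfrac{3}{4}+\tfrac{\ln\pi}{2}\bigr)$, and its additive constant is $\zeta'(-1)$; with these, every cancellation you describe, including the $-\tfrac{1}{240n^2}$ consistency check, goes through exactly as in the paper.
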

\begin{proof}
Recall that from (\ref{pnt}) we have
\be\label{pnt21}
\ln \mathbb{P}(n, t)=\ln D_n(t)-\ln D_n(-\infty),
\ee
and the large $n$ asymptotics of $\ln D_n(t)$ is given by (\ref{dnta}). It is well known that $D_n(-\infty)$ has the closed-form expression \cite[p. 321]{Mehta} (see also \cite[p. 302]{Min2019})
$$
D_n(-\infty)=(2\pi)^{\frac{n}{2}}2^{-\frac{n^2}{2}}G(n+1),
$$
where $G(\cdot)$ is the Barnes $G$-function and satisfies the functional equation \cite{Barnes,Voros}
$$
G(z+1)=\Gamma(z)G(z),\qquad\qquad G(1)=1.
$$
It follows that
$$
\ln D_n(-\infty)=-\frac{n^2}{2}\ln 2+\frac{n}{2}\ln(2\pi)+\ln G(n+1).
$$
By using the large $n$ asymptotics of Barnes $G$-function \cite[p. 285]{Barnes}
$$
\ln G(n+1)=\frac{1}{2}n^2\ln n-\frac{3}{4}n^2+\frac{n}{2}\ln (2\pi)-\frac{1}{12}\ln n+\zeta'(-1)-\frac{1}{240n^2}+\frac{1}{1008n^4}+O(n^{-6}),
$$
we obtain as $n\rightarrow \infty$,
\be\label{dnmi}
\ln D_n(-\infty)=\frac{1}{2}n^2\ln n-\left(\frac{3}{4}+\frac{\ln 2}{2}\right)n^2+n\ln(2\pi)-\frac{1}{12}\ln n+\zeta'(-1)-\frac{1}{240n^2}+\frac{1}{1008n^4}+O(n^{-6}).
\ee
Substituting (\ref{dnta}) and (\ref{dnmi}) into (\ref{pnt21}), we establish the theorem.
\end{proof}

\section{Quartic Freud Unitary Ensemble}
\subsection{Orthogonal Polynomials for the Quartic Freud Weight with a Jump}
In this section, we consider the $m=2$ case.
The weight function $w(x)$ now reads
$$
w(x;t)=w_0(x)(A+B \theta(x-t)),\qquad x,\; t \in \mathbb{R},
$$
where $w_0(x)$ is the quartic Freud weight
\be\label{fw}
w_0(x)=\mathrm{e}^{-x^{4}},\qquad x \in \mathbb{R}.
\ee
Orthogonal polynomials with the (generalized) quartic Freud weight have been studied extensively, see e.g. \cite{Nevai,BW,VanAssche2,Magnus,BI,Clarkson4,Clarkson1}. They have played an important role in random matrix theory known as the quartic matrix model, see e.g. \cite{BI,BI2,BGM}.

From (\ref{fw}) we have
\be\label{pt}
\mathrm{v_0}(x)=-\ln w(x_0)=x^4.
\ee
It follows that
\be\label{vp}
\frac{\mathrm{v}_0'(x)-\mathrm{v}_0'(y)}{x-y}=4(x^2+xy+y^2).
\ee
Inserting (\ref{vp}) into the definitions of $A_n(x)$ and $B_n(x)$ in (\ref{an}) and (\ref{bn}), we have
\begin{align}\label{anz}
A_n(x)  &=\frac{R_n(t)}{x-t}+\frac{4}{h_n} \int_{-\infty}^{\infty}\left(x^2+x y+y^2\right) P_n^2(y) w(y) d y\no\\
&=\frac{R_n(t)}{x-t}+4 x^2+4x\alpha_n+R_n^*(t),
\end{align}
\begin{align}\label{bnz}
B_n(x) & =\frac{r_n(t)}{x-t}+\frac{4}{h_{n-1}} \int_{-\infty}^{\infty}\left(x^2+x y+y^2\right) P_n(y) P_{n-1}(y)w(y) d y\no\\
&=\frac{r_n(t)}{x-t}+4x \beta_n+r_n^{*}(t),
\end{align}
where $R_n(t)$ and $r_n(t)$ are given by (\ref{Rnt}) and (\ref{rnt}) respectively, and
$$
R_n^{*}(t):=\frac{4}{h_n} \int_{-\infty}^{\infty} y^2 P_n^2(y) w(y) d y,
$$
$$
r_n^{*}(t):=\frac{4}{h_{n-1}} \int_{-\infty}^{\infty} y^2 P_n(y) P_{n-1}(y) w(y) d y.
$$

Next we use the compatibility conditions satisfied by $A_n(x)$ and $B_n(x)$ to analyze the recurrence coefficients and the auxiliary quantities.
Substituting (\ref{anz}) and (\ref{bnz}) into ($S_1$), we obtain the following three identities:
\begin{equation}\label{s2}
4\left(\beta_n+\beta_{n+1}\right)=R_n^{*}(t)-4 \alpha_n^2,
\end{equation}
\begin{equation}\label{s1}
r_n(t)+r_{n+1}(t)=\left(t-\alpha_n\right) R_n(t),
\end{equation}
\begin{equation}\label{s3}
r_n^{*}(t)+r_{n+1}^{*}(t)=R_n(t)-\alpha_n R_n^{*}(t).
\end{equation}
Similarly, substituting (\ref{anz}) and (\ref{bnz}) into ($S_2'$) produces the following six identities:
\begin{equation}\label{s6}
r_n^{*}(t)=4\left(\alpha_{n-1}+\alpha_n\right) \beta_n,
\end{equation}
\begin{equation}\label{s4}
r_n^2(t)=\beta_n R_{n-1}(t) R_n(t),
\end{equation}
\begin{equation}\label{s7}
4 \beta_n^2+ r_n(t)+ n= \beta_n\left(R_{n-1}^{*}(t)+R_n^{*}(t)+4 \alpha_{n-1} \alpha_n\right) ,
\end{equation}
\begin{equation}\label{s8}
2 \beta_n r_n^{*}(t)+ t r_n(t)+ \sum_{j=0}^{n-1} \alpha_j= \beta_n\left(R_{n-1}(t)+R_{n}(t)+\alpha_n R_{n-1}^{*}(t)+\alpha_{n-1} R_n^{*}(t)\right),
\end{equation}\\[-38pt]
\begin{align}\label{s5}
& 8 t \beta_n r_n(t)+2 r_n(t) r_n^{*}(t)+4t^3 r_n(t)+\sum_{j=0}^{n-1} R_j(t) \no\\
& = \beta_n\left[\left(4 t^2+4 t\alpha_{n-1}+R_{n-1}^{*}(t)\right)R_n(t)+\left(4 t^2+4 t\alpha_{n}+R_{n}^{*}(t)\right)R_{n-1}(t)\right],
\end{align}\\[-45pt]
\begin{align}
& (r_n^{*}(t))^2+4 t^2 r_n(t)+8 \beta_n r_n(t) +\sum_{j=0}^{n-1} R_j^{*}(t)\no\\
&=\beta_n\left[4 t R_{n-1}(t)+4 t R_n(t)+4 \alpha_{n-1} R_n(t)+4 \alpha_n R_{n-1}(t)+R_{n-1}^{*}(t) R_n^{*}(t)\right].\no
\end{align}

From the above identities, we first give the expressions of the auxiliary quantities in terms of the recurrence coefficients in the following lemma.
\begin{lemma}\label{le}
The auxiliary quantities $R_n^{*}(t)$, $r_n^{*}(t)$, $R_n(t)$ and $r_n(t)$  can be expressed in terms of the recurrence coefficients $\alpha_n$ and $\bt_n$ as follows:
\be\label{Rn3}
R_n^*(t)=4\left(\alpha_n^2+\beta_n+\beta_{n+1}\right),
\ee
\be\label{rn3}
r_n^*(t)=4\left(\alpha_{n-1}+\alpha_n\right) \beta_n,
\ee
\be\label{Rn4}
R_n(t)=4\left[\alpha_n^3+\alpha_{n-1} \beta_n+2 \alpha_n\left(\beta_n+\beta_{n+1}\right)+\alpha_{n+1} \beta_{n+1}\right],
\ee
\be\label{rn4}
r_n(t)=4 \beta_n\left(\alpha_{n-1}^2+\alpha_{n-1} \alpha_n+\alpha_n^2+ \beta_{n-1}+\beta_n+\beta_{n+1}\right)-n.
\ee
\end{lemma}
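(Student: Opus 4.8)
The plan is to read each of the four auxiliary quantities off a single, well-chosen compatibility identity, ordered so that every substitution is immediate and no genuine recursion has to be solved. The key observation is that ($S_1$) supplies $R_n^{*}$ and $R_n$ while ($S_2'$) supplies $r_n^{*}$ and $r_n$, and in each case the target appears linearly. First, (\ref{s2}), which came from ($S_1$), already isolates $R_n^{*}(t)$: solving it gives $R_n^{*}(t)=4\alpha_n^{2}+4(\beta_n+\beta_{n+1})$, which is (\ref{Rn3}). Second, (\ref{rn3}) is nothing but (\ref{s6}) restated, so there is nothing to prove there.

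For $R_n(t)$ I would use (\ref{s3}), namely $r_n^{*}(t)+r_{n+1}^{*}(t)=R_n(t)-\alpha_n R_n^{*}(t)$, which is linear in $R_n(t)$. Rearranging gives $R_n(t)=r_n^{*}(t)+r_{n+1}^{*}(t)+\alpha_n R_n^{*}(t)$, and every term on the right is now known: substitute (\ref{rn3}) at index $n$ and at index $n+1$, together with (\ref{Rn3}). Expanding and collecting terms, the two pairs $4\alpha_n\beta_n$ and $4\alpha_n\beta_{n+1}$ combine into the factor $2\alpha_n(\beta_n+\beta_{n+1})$, and one lands on $R_n(t)=4\bigl[\alpha_n^{3}+\alpha_{n-1}\beta_n+2\alpha_n(\beta_n+\beta_{n+1})+\alpha_{n+1}\beta_{n+1}\bigr]$, i.e. (\ref{Rn4}).

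Finally, for $r_n(t)$ I would use (\ref{s7}), again linear in the target: $r_n(t)=\beta_n\bigl(R_{n-1}^{*}(t)+R_n^{*}(t)+4\alpha_{n-1}\alpha_n\bigr)-4\beta_n^{2}-n$. Substituting (\ref{Rn3}) at indices $n-1$ and $n$, the bracket becomes $4\bigl(\alpha_{n-1}^{2}+\alpha_{n-1}\alpha_n+\alpha_n^{2}+\beta_{n-1}+2\beta_n+\beta_{n+1}\bigr)$; after multiplying by $\beta_n$ the coefficient of $\beta_n^{2}$ is $8$, and subtracting the explicit $4\beta_n^{2}$ leaves coefficient $4$, which is exactly the middle $\beta_n$-term inside the parenthesis. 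This yields $r_n(t)=4\beta_n\bigl(\alpha_{n-1}^{2}+\alpha_{n-1}\alpha_n+\alpha_n^{2}+\beta_{n-1}+\beta_n+\beta_{n+1}\bigr)-n$, i.e. (\ref{rn4}).

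I do not expect a real obstacle: each formula follows from one compatibility identity by substitution and polynomial expansion, with no cancellation subtleties and no induction. The remaining identities (\ref{s1}), (\ref{s4}), (\ref{s5}), (\ref{s8}) and the unnumbered one are not needed for the lemma; they will instead become the source of the difference equations for $\alpha_n$ and $\beta_n$ obtained from this lemma afterwards. The only point requiring a moment's thought is the bookkeeping of which identity to use for which quantity and in what order, so that $R_n^{*}$ and $r_n^{*}$ are in hand before one tackles $R_n$ and $r_n$.
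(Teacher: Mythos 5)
Your proposal is correct and follows exactly the same route as the paper: $R_n^{*}$ from (\ref{s2}), $r_n^{*}$ as a restatement of (\ref{s6}), then (\ref{Rn4}) and (\ref{rn4}) by substituting these into (\ref{s3}) and (\ref{s7}) respectively. The intermediate expansions you carry out are all accurate.
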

\begin{proof}
From (\ref{s2}) we have (\ref{Rn3}) and we restate (\ref{s6}) as (\ref{rn3}). Substituting (\ref{Rn3}) and (\ref{rn3}) into (\ref{s3}) and (\ref{s7}), we obtain (\ref{Rn4}) and (\ref{rn4}), respectively.
\end{proof}
\begin{remark}
The expressions of $R_n^{*}(t)$ and $r_n^{*}(t)$ in (\ref{Rn3}) and (\ref{rn3}) can also be obtained by using the three-term recurrence relation and the orthogonality from their definitions.
\end{remark}
Now we are ready to derive the discrete system for the recurrence coefficients.
\begin{theorem}\label{the}
The recurrence coefficients $\alpha_n$ and $\bt_n$ satisfy the difference equations
\begin{subequations}\label{de}
\begin{align}\label{chafen1}
& 4 \beta_n\left(\alpha_{n-1}^2+\alpha_{n-1} \alpha_n+\alpha_n^2+\beta_{n-1}+\beta_n+\beta_{n+1}\right)\no\\
&+4 \beta_{n+1}\left(\alpha_n^2+\alpha_n \alpha_{n+1}+\alpha_{n+1}^2+\beta_n+\beta_{n+1}+\beta_{n+2}\right)-2n-1\no\\
&=4\left(t-\alpha_n\right)\left[\alpha_n^3+\alpha_{n-1} \beta_n+2 \alpha_n\left(\beta_n+\beta_{n+1}\right)+\alpha_{n+1} \beta_{n+1}\right],
\end{align}
\begin{align}\label{chafen2}
&\left[4 \beta_n\left(\alpha_{n-1}^2+\alpha_{n-1} \alpha_n+\alpha_n^2+\beta_{n-1}+\beta_n+\beta_{n+1}\right)-n\right]^2\no\\
& =16 \beta_n\left[\alpha_{n-1}^3+\alpha_{n-2} \beta_{n-1}+2 \alpha_{n-1}\left(\beta_{n-1}+\beta_n\right)+\alpha_n \beta_n\right]\no\\
&\times\left[\alpha_n^3+\alpha_{n-1} \beta_n+2 \alpha_n\left(\beta_n+\beta_{n+1}\right)+\alpha_{n+1} \beta_{n+1}\right].
\end{align}
\end{subequations}
\end{theorem}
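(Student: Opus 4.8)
The plan is to feed the closed-form expressions of Lemma~\ref{le} into the two compatibility identities that have not yet been consumed in the derivation of that lemma, namely (\ref{s1}) and (\ref{s4}). These are precisely the identities among ($S_1$)–($S_2'$) that involve only $R_n(t)$ and $r_n(t)$ and none of the starred auxiliary quantities $R_n^{*}(t), r_n^{*}(t)$, so once Lemma~\ref{le} is in hand no further elimination step is needed — the theorem is obtained by pure substitution.

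First I would treat (\ref{chafen1}). Rewrite the left-hand side of (\ref{s1}), that is $r_n(t)+r_{n+1}(t)$, using (\ref{rn4}) with indices $n$ and $n+1$; the two constant terms $-n$ and $-(n+1)$ combine into $-2n-1$, which I move to the left. On the right-hand side of (\ref{s1}) I substitute (\ref{Rn4}) for $R_n(t)$, keeping the factor $4(t-\alpha_n)$ out front. What remains is exactly the displayed identity (\ref{chafen1}) between two polynomials in $\alpha_{n-1},\alpha_n,\alpha_{n+1}$ and $\beta_{n-1},\beta_n,\beta_{n+1},\beta_{n+2}$, with no cancellation or simplification required. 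For (\ref{chafen2}) I start from (\ref{s4}), $r_n^2(t)=\beta_n R_{n-1}(t)R_n(t)$: substituting (\ref{rn4}) with index $n$ into the left-hand side gives the square $\bigl[4\beta_n(\alpha_{n-1}^2+\alpha_{n-1}\alpha_n+\alpha_n^2+\beta_{n-1}+\beta_n+\beta_{n+1})-n\bigr]^2$, and substituting (\ref{Rn4}) with indices $n-1$ and $n$ into the right-hand side, then pulling the two factors of $4$ out of $R_{n-1}(t)$ and $R_n(t)$, produces the overall constant $16\beta_n$ and the two bracketed cubic-type factors appearing in (\ref{chafen2}).

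There is essentially no analytic obstacle: the content is algebraic bookkeeping rather than estimation, and the only genuine work is carried by Lemma~\ref{le}, which in turn rests on the compatibility conditions ($S_1$), ($S_2'$) established in \cite{BC2009}. If anything plays the role of the ``hard part,'' it is keeping the index shifts straight and the polynomial expressions organized, together with a short sanity check at the boundary values $n=0,1$: the relations in Lemma~\ref{le} were derived using the initial data $P_0(x;t)=1$, $\beta_0(t)P_{-1}(x;t)=0$ and $\mathrm{p}(0,t)=0$, and one should confirm that with these conventions (so that all terms carrying $\beta_0$ drop out) the difference equations (\ref{chafen1})–(\ref{chafen2}) hold for the smallest values of $n$, so that they may legitimately be iterated. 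Granting this, the proof is the one-line substitution described above.
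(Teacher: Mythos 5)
Your proposal is correct and follows exactly the paper's own proof: substituting the expressions (\ref{rn4}) and (\ref{Rn4}) from Lemma \ref{le} into (\ref{s1}) yields (\ref{chafen1}), and into (\ref{s4}) yields (\ref{chafen2}), with the index bookkeeping you describe. The extra remark about checking the small-$n$ boundary cases is reasonable diligence but not part of (or needed beyond) the paper's one-line argument.
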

\begin{proof}
Substituting (\ref{Rn4}) and (\ref{rn4}) into (\ref{s1}), we get (\ref{chafen1}). Similarly, substituting (\ref{Rn4}) and (\ref{rn4}) into (\ref{s4}), we obtain (\ref{chafen2}).
\end{proof}

The above discrete system is vital to derive the large $n$ asymptotic expansions of $\alpha_n$ and $\bt_n$ in the next subsection. Similarly as in the $m=1$ case, we also have the coupled differential-difference equations for the recurrence coefficients.
\begin{theorem}
The recurrence coefficients $\alpha_n$ and  $\beta_{n}$ satisfy the differential-difference equations
$$
\alpha_n'(t)=4 \beta_n\left(\alpha_{n-1}^2+\alpha_{n-1} \alpha_n+\alpha_n^2+ \beta_{n-1}+\beta_n\right)-4 \beta_{n+1}\left(\alpha_{n}^2+\alpha_{n} \alpha_{n+1}+\alpha_{n+1}^2+ \beta_{n+1}+\beta_{n+2}\right)+1,
$$
$$
\beta_n'(t)=4\beta_{n}\left(\alpha_{n-1}^3+\alpha_{n-2} \beta_{n-1}+2 \alpha_{n-1}\beta_{n-1}+\alpha_{n-1} \beta_{n}-\alpha_{n}^3-\alpha_{n} \beta_{n}-2 \alpha_{n}\beta_{n+1}-\alpha_{n+1} \beta_{n+1}\right).
$$
\end{theorem}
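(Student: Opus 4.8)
The plan is to derive both equations by pure substitution, since all the analytic content has already been packaged into the two general Toda-type identities (\ref{eq2}) and (\ref{eq3}) together with the closed forms in Lemma \ref{le}. For the first equation I start from (\ref{eq2}), namely $\alpha_n'(t)=r_n(t)-r_{n+1}(t)$, and insert the expression (\ref{rn4}) for $r_n(t)$ and its index-shifted version for $r_{n+1}(t)$. The constant parts contribute $-n-(-(n+1))=1$. The key simplification is that $r_n(t)$ carries a term $4\beta_n\beta_{n+1}$ and $r_{n+1}(t)$ carries a term $4\beta_{n+1}\beta_n$, and these cancel in the difference; what survives is exactly
$$
4\beta_n\left(\alpha_{n-1}^2+\alpha_{n-1}\alpha_n+\alpha_n^2+\beta_{n-1}+\beta_n\right)-4\beta_{n+1}\left(\alpha_n^2+\alpha_n\alpha_{n+1}+\alpha_{n+1}^2+\beta_{n+1}+\beta_{n+2}\right)+1,
$$
which is the claimed formula for $\alpha_n'(t)$.

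For the second equation I start from (\ref{eq3}), namely $\beta_n'(t)=\beta_n\left(R_{n-1}(t)-R_n(t)\right)$, and substitute the closed form (\ref{Rn4}) for $R_n(t)$ together with its index-shifted version for $R_{n-1}(t)$. Writing out $R_{n-1}(t)-R_n(t)$ and expanding the brackets $2\alpha_{n-1}(\beta_{n-1}+\beta_n)$ and $2\alpha_n(\beta_n+\beta_{n+1})$, the two terms $+2\alpha_{n-1}\beta_n$ (from $R_{n-1}$) and $-\alpha_{n-1}\beta_n$ (from $-R_n$) combine to $+\alpha_{n-1}\beta_n$, while $+\alpha_n\beta_n$ (from $R_{n-1}$) and $-2\alpha_n\beta_n$ (from $-R_n$) combine to $-\alpha_n\beta_n$. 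Collecting everything and multiplying through by $4\beta_n$ yields precisely
$$
4\beta_n\left(\alpha_{n-1}^3+\alpha_{n-2}\beta_{n-1}+2\alpha_{n-1}\beta_{n-1}+\alpha_{n-1}\beta_n-\alpha_n^3-\alpha_n\beta_n-2\alpha_n\beta_{n+1}-\alpha_{n+1}\beta_{n+1}\right),
$$
the stated expression for $\beta_n'(t)$.

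There is essentially no obstacle here: the derivation is routine algebra once Lemma \ref{le} and the identities (\ref{eq2})--(\ref{eq3}) are in hand. The only point worth flagging in the write-up is the pair of cancellations/combinations described above, since they are what turn the naive substituted expressions into the compact forms in the statement. I would therefore keep the proof to one or two sentences, pointing to the substitutions $r_n\mapsto(\ref{rn4})$ in (\ref{eq2}) and $R_n\mapsto(\ref{Rn4})$ in (\ref{eq3}), exactly in the style of the proof of Theorem \ref{th2} in the $m=1$ case.
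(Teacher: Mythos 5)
Your proposal is correct and is exactly the paper's argument: substitute (\ref{rn4}) into (\ref{eq2}) and (\ref{Rn4}) into (\ref{eq3}), and the cancellations you flag ($4\beta_n\beta_{n+1}$ in the difference $r_n-r_{n+1}$, and the recombination of the $\alpha_{n-1}\beta_n$ and $\alpha_n\beta_n$ terms in $R_{n-1}-R_n$) are precisely what produce the stated compact forms. Nothing further is needed.
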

\begin{proof}
Substituting (\ref{rn4}) into (\ref{eq2}) and (\ref{Rn4}) into (\ref{eq3}) respectively, we obtain the desired results.
\end{proof}


\subsection{Large $n$ Asymptotics of the Smallest Eigenvalue Distribution of the Quartic Freud Unitary Ensemble}
In this subsection, we consider the smallest eigenvalue distribution of the quartic Freud unitary ensemble. This corresponds to $A=0, B=1$ and note that in this case $\mathcal{D}_n(t)=D_n(t)$.
Similarly as the development in Section 2.2, we will derive the large $n$ asymptotics of the recurrence coefficients, the Hankel determinant $D_n(t)$ and the probability $\mathbb{P}(n,t)$ by using Dyson's Coulomb fluid approach.

We start from substituting (\ref{pt}) into (\ref{sup2}), to obtain a quartic equation satisfied by $b$,
$$
35 b^4 + 20 b^3 t + 18 b^2 t^2 + 20 b t^3 + 35 t^4=64n.
$$
This equation has a unique solution subject to the condition $b>t$, and the large $n$ series expansion reads
\be\label{b}
b=\frac{2 \sqrt{2}\:n^{1 / 4}}{\ka}-\frac{t}{7}-\frac{6\sqrt{2}\:t^2}{7 \ka^3 n^{1/4}}-\frac{24 t^3 }{49 \ka^2 n^{1/2}}-\frac{426\sqrt{2}\: t^4}{1715 \ka\: n^{3/4}}
-\frac{234\sqrt{2}\:t^6}{1715\ka^3n^{5/4}}-\frac{10008 t^7}{84035 \ka^2n^{3/2}}+O(n^{-7 / 4}),
\ee
where $\kappa=\sqrt[4]{35}$.
It follows from (\ref{b}) that
\be\label{dayuea}
\frac{t+b}{2}=\frac{ \sqrt{2}\:n^{1 / 4}}{\ka}+\frac{3t}{7}-\frac{3\sqrt{2}\:t^2}{7 \ka^3n^{1/4}}-\frac{12t^3 }{49 \ka^2n^{1/2}}-\frac{213\sqrt{2}\: t^4}{1715 \ka\: n^{3/4}}
-\frac{117\sqrt{2}\:t^6}{1715\ka^3n^{5/4}}-\frac{5004 t^7}{84035 \ka^2n^{3/2}}+O(n^{-7 / 4}),
\ee
\be\label{dayueb}
\left(\frac{b-t}{4}\right)^2=\frac{n^{1/2}}{2 \ka^2}-\frac{2\sqrt{2}\:t\: n^{1 / 4}}{7 \ka}+\frac{17 t^2}{245}-\frac{177 t^4 }{3430 \ka^2n^{1/2}}+\frac{444 \sqrt{2}\: t^5}{12005\ka\: n^{3/4}}+\frac{414 \sqrt{2}\:  t^7}{84035\ka^3n^{5/4}}+O(n^{-3/2}).
\ee

Similarly as in the $m=1$ case, the Lagrange multiplier $\mathcal{A}$ is
$$
\mathcal{A}=\frac{1}{\pi}\int_{t}^{b}\frac{\mathrm{v}_0(x)}{\sqrt{(b-x)(x-t)}}dx-2n\ln\frac{b-t}{4}=\frac{n}{2}-2n \ln \frac{b-t}{4}.
$$
Using (\ref{b}), it follows that as $n\rightarrow\infty$,
\begin{align}\label{Azhangk}
\mathcal{A}=&-\frac{1}{2}n\ln n+\frac{1}{2} (1 + \ln 140) n+\frac{4 \sqrt{2}\:\ka t\: n^{3/4} }{7}+\frac{46 t^2n^{1/2} }{7 \ka^2}+\frac{232 \sqrt{2}\:t^3 n^{1/4} }{147 \ka}+t^4
\no\\[6pt]
&+\frac{452 \sqrt{2}\:  t^5}{49\ka^3n^{1/4}}+\frac{27166 t^6}{5145 \ka^2n^{1/2}}+\frac{936988 \sqrt{2}\:  t^7}{588245 \ka\: n^{3/4}}+\frac{t^8}{n}+O(n^{-5/4}).
\end{align}

Now we are ready to obtain the first important result on the asymptotic expansions of the recurrence coefficients as $n\rightarrow\infty$.
\begin{theorem}\label{thm}
The recurrence coefficients $\alpha_n$ and $\beta_{n}$ have the following large $n$ expansions:
\begin{align}\label{ashiji1}
\alpha_n=&\:\frac{\sqrt{2}\: n^{1 / 4}}{\ka}+\frac{4 t}{7}+\frac{9 t^2}{14 \sqrt{2}\: \ka^3 n^{1 / 4}}+\frac{\ka^2 t^3}{343 n^{1 / 2}}+\frac{6860+1311 t^4}{27440\sqrt{2}\:\ka\: n^{3 / 4}}-\frac{t^2(2257t^4 +8820)}{109760\sqrt{2}\:\ka^3n^{5 / 4}}\no\\[8pt]
&-\frac{t^3(348t^4 +1715)}{67228\ka^2n^{3/2}}-\frac{1120257 t^8+10792152 t^4+65883440}{602362880 \sqrt{2}\: \ka\: n^{7/4}}+\frac{3 t}{140 n^2}+O(n^{-9/4}),
\end{align}
\begin{align}\label{bshiji1}
\beta_n=&\:\frac{n^{1 / 2}}{2 \ka^2}-\frac{3 t\: n^{1 / 4}}{7 \sqrt{2}\: \ka}+\frac{27 t^2}{490}-\frac{t^3}{28 \sqrt{2}\: \ka^3 n^{1 / 4}}+\frac{6 t^4}{1715 \ka^2 n^{1 / 2}}-\frac{3573t^5}{384160\sqrt{2}\:\ka\: n^{3/4}}+\frac{17937t^7}{10756480\sqrt{2}\:\ka^3n^{5/4}}\no\\[8pt]
&+\frac{17424 t^8+2941225}{47059600 \ka^2 n^{3/2}}+\frac{t \left(2707091 t^8-536479440\right)}{8433080320 \sqrt{2}\: \ka\: n^{7/4}}+\frac{1251 t^2}{156800 n^2}+O(n^{-9 / 4}),
\end{align}
where $\kappa=\sqrt[4]{35}$.
\end{theorem}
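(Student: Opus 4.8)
The plan is to mirror exactly the bootstrap used for the Gaussian case (Theorem following \eqref{ab}), but now anchored on the Coulomb-fluid data \eqref{dayuea}, \eqref{dayueb}, \eqref{Azhangk} and the discrete system \eqref{de}. First I would determine the \emph{ansatz} for the two expansions. The leading behaviour \eqref{al}--\eqref{beta} of Chen and Ismail, together with \eqref{dayuea} and \eqref{dayueb}, forces $\alpha_n$ to be a series in descending quarter-powers of $n$ starting at $n^{1/4}$ and $\beta_n$ a series starting at $n^{1/2}$; since $\partial^2\mathcal{A}/\partial t\partial n$ and $\partial^3\mathcal{A}/\partial n^3$ in \eqref{al}--\eqref{beta} are, by \eqref{Azhangk}, of order $n^{-1/4}$ and $n^{-5/2}$ respectively, the correction terms indeed enter at the claimed orders, so I postulate
\begin{equation}
\alpha_n=\sum_{j=-1}^{\infty}\frac{a_j}{n^{j/4}},\qquad
\beta_n=\sum_{j=-2}^{\infty}\frac{b_j}{n^{j/4}},\no
\end{equation}
with the leading coefficients read off as $a_{-1}=\sqrt{2}/\kappa$ and $b_{-2}=1/(2\kappa^2)$ from \eqref{dayuea}--\eqref{dayueb} (here $\kappa=\sqrt[4]{35}$).

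Next I would substitute these ans\"atze into the pair of difference equations \eqref{chafen1} and \eqref{chafen2}. The key mechanism is that $\alpha_{n\pm1}$ and $\beta_{n\pm1}$, $\beta_{n\pm2}$ are themselves expanded using $(n+k)^{-j/4}=n^{-j/4}\bigl(1+k/n\bigr)^{-j/4}$, which introduces shift corrections that couple the coefficient of a given power of $n$ to \emph{higher-index} coefficients $a_{j'},b_{j'}$ with $j'<j$. After collecting like powers of $n^{1/4}$ on both sides, each order yields one scalar equation linear in the two ``new'' unknowns $a_j$ and $b_j$ (with all lower-order ones already fixed), so the whole system is solved recursively. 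Concretely, the leading orders of \eqref{chafen1} and \eqref{chafen2} must be checked to be consistent with $a_{-1}=\sqrt2/\kappa$, $b_{-2}=1/(2\kappa^2)$, and then the next orders give $a_0=4t/7$, $b_{-1}=-3t/(7\sqrt2\,\kappa)$, and so on down the list stated in \eqref{ashiji1}--\eqref{bshiji1}. One should also cross-check the $t$-independent pieces (and the $\beta_n$ constant term $27t^2/490$) against $(t+b)/2$ and $\bigl((b-t)/4\bigr)^2$ in \eqref{dayuea}--\eqref{dayueb} to confirm the ansatz is correctly normalized and that no half-integer powers of $n$ intrude.

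The main obstacle is purely computational bookkeeping rather than conceptual: equations \eqref{chafen1}--\eqref{chafen2} are cubic–quartic in the $\alpha$'s and quadratic in the $\beta$'s with five neighbouring indices appearing, so each order produces a long polynomial identity, and one must carry enough terms (through $O(n^{-9/4})$ for $\alpha_n$, $O(n^{-9/4})$ for $\beta_n$) that error propagation through the shift expansions is controlled. I would organize this by first verifying that \eqref{de} admits a formal solution of the postulated shape (so that the recursion never becomes singular, i.e. the $2\times2$ linear system for $(a_j,b_j)$ at each stage is nondegenerate), and only then grinding out the coefficients — ideally with a symbolic computation — matching them term by term to \eqref{ashiji1} and \eqref{bshiji1}. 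This completes the proof.
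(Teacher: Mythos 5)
Your proposal follows essentially the same route as the paper: postulate quarter-power expansions with leading coefficients $a_{-1}=\sqrt{2}/\kappa$ and $b_{-2}=1/(2\kappa^2)$ read off from the Coulomb-fluid formulas (\ref{al})--(\ref{beta}) together with (\ref{dayuea}), (\ref{dayueb}) and (\ref{Azhangk}), then substitute into the discrete system (\ref{de}) and determine the remaining coefficients recursively by matching powers of $n^{1/4}$. The only quibble is your stated order $O(n^{-5/2})$ for $\partial^3\mathcal{A}/\partial n^3$ (the $-\tfrac12 n\ln n$ term in (\ref{Azhangk}) makes it $O(n^{-2})$), but this is immaterial since, exactly as in the paper, the Coulomb fluid is used only to fix the shape of the ansatz and the leading coefficients, with all subleading terms coming from the difference equations.
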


\begin{proof}
From (\ref{al}), (\ref{beta}), (\ref{dayuea}), (\ref{dayueb}) and (\ref{Azhangk}), it is easy to see that $\al_n$ and $\bt_{n}$ have the large $n$ expansions of the forms
\be\label{f1}
\alpha_n=a_{-1}n^{1/4}+a_0+\sum_{j=1}^{\infty}\frac{a_{j}}{n^{j/4}},
\ee
\be\label{f2}
\bt_n=b_{-2}n^{1/2}+b_{-1}n^{1/4}+b_0+\sum_{j=1}^{\infty}\frac{b_{j}}{n^{j/4}},
\ee
where
$$
a_{-1}=\frac{\sqrt{2} }{\ka},\qquad\qquad b_{-2}=\frac{1}{2 \ka^2}.
$$
Substituting (\ref{f1}) and (\ref{f2}) into the discrete system satisfied by the recurrence coefficients in (\ref{de}) and taking a large $n$ limit, we obtain the expansion coefficients $a_j$ and $b_j$ recursively by equating the powers of $n$. The theorem is then established.
\end{proof}


\begin{theorem}\label{thm2}
The logarithmic derivative of the Hankel determinant, $\frac{d}{d t} \ln D_n(t)$, has the following expansion as $n\rightarrow\infty$:
\begin{align}\label{dlnDnt}
\frac{d}{d t} \ln D_n(t)  =&-\frac{16 \sqrt{2}\:\ka\: n^{7/4}}{49}-\frac{48 t\: n^{3 / 2}}{7 \ka^2}-\frac{516 \sqrt{2}\: t^2 n^{5/4}}{245 \ka}-\frac{2984 t^3 n}{1715}-\frac{87\ka\: t^4n^{3/4}}{343\sqrt{2}} -\frac{2904 t^5 n^{1 / 2}}{1715\ka^2}\no\\[6pt]
 &-\frac{208497t^6 n^{1/4}}{336140\sqrt{2}\:\ka}+\frac{4740552t^7}{20588575}-\frac{22696293 t^8 -9411920}{37647680 \sqrt{2}\: \ka^3n^{1/4}}-\frac{2 t (664784 t^8+1764735)}{20588575 \ka^2n^{1/2}}\no\\[6pt]
 &+O(n^{-3/4}),
\end{align}
where $\kappa=\sqrt[4]{35}$.
\end{theorem}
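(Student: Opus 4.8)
The plan is to mimic exactly the argument used for the Gaussian case in Theorem~2.6, with the quartic-weight expressions replacing the linear ones. We need a closed-form relation expressing $\frac{d}{dt}\ln D_n(t)$ in terms of the recurrence coefficients $\al_n$ and $\bt_n$, and then substitute the large $n$ expansions from Theorem~\ref{thm}.

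First I would establish the analogue of the identity (\ref{ex2}) for $m=2$. Start from (\ref{eq4}), which gives $\frac{d}{dt}\ln D_n(t) = -\sum_{j=0}^{n-1}R_j(t)$. The sum $\sum_{j=0}^{n-1}R_j(t)$ appears on the left-hand side of (\ref{s5}), so solving (\ref{s5}) for it yields
\begin{align}
\frac{d}{dt}\ln D_n(t) =&\: 8 t \beta_n r_n(t)+2 r_n(t) r_n^{*}(t)+4t^3 r_n(t)\no\\
&-\beta_n\left[\left(4 t^2+4 t\alpha_{n-1}+R_{n-1}^{*}(t)\right)R_n(t)+\left(4 t^2+4 t\alpha_{n}+R_{n}^{*}(t)\right)R_{n-1}(t)\right].\no
\end{align}
Every quantity on the right is, by Lemma~\ref{le}, a polynomial in $\al_j, \bt_j$ for $j$ in a bounded window around $n$: substitute (\ref{Rn3})–(\ref{rn4}) (and their index shifts $n\to n-1$) to obtain a purely polynomial expression in $\al_{n-2},\dots,\al_{n+1}$ and $\bt_{n-2},\dots,\bt_{n+1}$, with $t$-dependent coefficients. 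This is the exact analogue of the computation giving (\ref{ex2}), only algebraically heavier.

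Next I would substitute the expansions (\ref{ashiji1}) and (\ref{bshiji1}) — together with their shifted versions, obtained by replacing $n$ by $n\pm 1$ and re-expanding in powers of $n^{1/4}$ — into that polynomial expression, collect powers of $n$, and read off the coefficients down to the $n^{-1/2}$ term. The leading behaviour $-\tfrac{16\sqrt2\,\ka}{49}n^{7/4}$ should already emerge from the dominant terms $a_{-1}=\sqrt2/\ka$, $b_{-2}=1/(2\ka^2)$, providing a useful consistency check; one can cross-check a few more coefficients against (\ref{eq1}), i.e. $\frac{d}{dt}\ln D_n(t) = 2\,\mathrm{p}(n,t)$ is not available here (that was $m=1$-specific), but one can still sanity-check against differentiating (\ref{dayuea})–(\ref{Azhangk}) if needed. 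The main obstacle is purely computational bookkeeping: the quartic identities involve five consecutive recurrence coefficients, each expanded to order $n^{-9/4}$, and shifting the index introduces many cross terms, so the expansion must be carried with enough spare precision (at least through $n^{-3/4}$ beyond the stated cutoff) to be sure no contributions are dropped. I would therefore organize the calculation symbolically, tracking $\ka^4=35$ throughout to keep the radical coefficients in the stated normalized form, and verify at the end that all half-integer and integer powers of $n$ above $n^{-1/2}$ have the claimed coefficients while the $O(n^{-3/4})$ remainder genuinely starts there.
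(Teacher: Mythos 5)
Your proposal matches the paper's proof essentially verbatim: the paper also starts from (\ref{eq4}), solves (\ref{s5}) for $\sum_{j=0}^{n-1}R_j(t)$ to get exactly the displayed identity, rewrites everything in terms of $\al_n,\bt_n$ via Lemma \ref{le}, and then substitutes the expansions (\ref{ashiji1}) and (\ref{bshiji1}). The approach and the key identity are the same, so no further comparison is needed.
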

\begin{proof}
From (\ref{eq4}) we have
$$
\frac{d}{d t} \ln D_n(t)=-\sum_{j=0}^{n-1} R_j(t).
$$
By making use of (\ref{s5}), we find
\begin{align}
\frac{d}{d t} \ln D_n(t)=&\: 8 t \beta_n r_n(t)+2 r_n(t) r_n^{*}(t)+4t^3 r_n(t)-\beta_n\big[\left(4 t^2+4 t\alpha_{n-1}+R_{n-1}^{*}(t)\right)R_n(t) \no\\
&+\left(4 t^2+4 t\alpha_{n}+R_{n}^{*}(t)\right)R_{n-1}(t)\big].\no
\end{align}
Substituting the expressions of the auxiliary quantities $R_n^{*}(t)$, $r_n^{*}(t)$, $R_n(t)$ and $r_n(t)$ in Lemma \ref{le} into the above, we can express $\frac{d}{d t} \ln D_n(t)$ in terms of the recurrence coefficients $\alpha_n$ and $\bt_n$. By making use of (\ref{ashiji1}) and (\ref{bshiji1}), we obtain (\ref{dlnDnt}).
\end{proof}
\begin{remark}
Substituting (\ref{sum}) into (\ref{s8}), one can express the sub-leading coefficient $\mathrm{p}(n,t)$ in terms of the recurrence coefficients $\al_n$ and $\bt_n$ with the aid of Lemma \ref{le}. Then the large $n$ asymptotics of $\mathrm{p}(n,t)$ can be derived by using (\ref{ashiji1}) and (\ref{bshiji1}). We leave it to the interested reader.
\end{remark}
\begin{theorem}
The Hankel determinant $D_n(t)$ has the large $n$ asymptotics
\begin{align}\label{dnt}
\ln D_n(t)  =&\:\frac{1}{4}n^2\ln n-\left(\frac{3}{8}+\ln \big(\sqrt{2}\:\ka\big)\right)n^2-\frac{16 \sqrt{2}\:\ka\:t\: n^{7/4}}{49}-\frac{24 t^2 n^{3 / 2}}{7 \ka^2}-\frac{172 \sqrt{2}\: t^3 n^{5/4}}{245 \ka} \no\\[5pt]
 &+\left(\ln(2\pi)-\frac{746 t^4 }{1715}\right)n-\frac{87 t^5n^{3/4}}{49\sqrt{2}\:\ka^3}-\frac{484 t^6 n^{1 / 2}}{1715\ka^2}-\frac{208497t^7 n^{1/4}}{2352980\sqrt{2}\:\ka}-\frac{1}{6}\ln n\no\\[5pt]
&+\frac{592569t^8}{20588575}+2\zeta'(-1)-\frac{1}{3}\ln 2+\frac{1}{8}\ln 7-\frac{ t (7565431 t^8-28235760)}{112943040 \sqrt{2}\: \ka^3n^{1/4}}\no\\[6pt]
&-\frac{ t^2 (664784 t^8+8823675)}{102942875 \ka^2n^{1/2}}+O(n^{-3/4}),
\end{align}
where $\kappa=\sqrt[4]{35}$, and $\zeta'(\cdot)$ is the derivative of the Riemann zeta function.
\end{theorem}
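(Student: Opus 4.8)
The plan is to repeat, for $m=2$, the argument that yielded (\ref{dnta}) in the Gaussian case. The starting point is the Coulomb fluid relation (\ref{pd}), $\partial F[\s]/\partial n=\mathcal{A}$, together with the expansion (\ref{Azhangk}) of the Lagrange multiplier. Integrating (\ref{Azhangk}) term by term in $n$ gives the large $n$ asymptotics of the free energy $F[\s]$ of (\ref{fe}) up to a single $n$-independent constant of integration; its two leading terms are $F[\s]=-\tfrac14 n^2\ln n+\bigl(\tfrac38+\ln(\sqrt2\,\ka)\bigr)n^2+\cdots$. By the result of Chen and Ismail \cite{ChenIsmail}, the quantity $F_n(t):=-\ln D_n(t)$ is approximated by $F[\s]$ for large $n$, so one postulates for $F_n(t)$ an expansion of the same shape as (\ref{fna}) but in quarter powers of $n$, i.e. a term proportional to $n^2\ln n$, a term proportional to $\ln n$, and a descending series $\sum_j c_j\, n^{j/4}$.

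I would then determine the coefficients $c_j$ from a Toda-type identity. From (\ref{bd}) one has
$$
\ln\bt_n=2F_n(t)-F_{n+1}(t)-F_{n-1}(t).
$$
Substituting the postulated expansion of $F_n(t)$ on the right and the known expansion (\ref{bshiji1}) of $\bt_n$ on the left, and equating coefficients of $n^{k/4}$ and of $n^{k/4}\ln n$, fixes all the $c_j$ recursively \emph{except} the coefficient of $n^1$ and the constant term, which lie in the kernel of the discrete second difference $f(n)\mapsto 2f(n)-f(n+1)-f(n-1)$. This step yields $\ln D_n(t)$ to the stated order, including the $\tfrac14 n^2\ln n$, $-\bigl(\tfrac38+\ln(\sqrt2\,\ka)\bigr)n^2$ and $-\tfrac16\ln n$ terms together with all the fractional and negative powers of $n$, leaving only the $t$-independent parts of the coefficients of $n$ and of $1$ undetermined.

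To pin down those two constants I would integrate the expansion (\ref{dlnDnt}) of $\frac{d}{dt}\ln D_n(t)$ from $0$ to $t$; this supplies the $t$-dependent parts of the $n$-coefficient (the $-\tfrac{746}{1715}t^4 n$ term) and of the constant term (the $\tfrac{592569}{20588575}t^8$ term), consistently with the derivative data. It then remains to find the asymptotics of $\ln D_n(0)$ for the quartic Freud weight, which I would obtain exactly as in Remark \ref{re}: the scaling $x\mapsto n^{1/4}x$ gives $D_n(0)=n^{n^2/4}\widetilde D_n(0)$ with $\widetilde D_n(0)=\det\bigl(\int_0^{\infty} x^{i+j}\mathrm{e}^{-nx^4}dx\bigr)_{i,j=0}^{n-1}$, and the required expansion of $\widetilde D_n(0)$ down to the order $1$ term can be extracted from the work of Claeys, Krasovsky and Minakov \cite{CKM} on Freud ensembles by specializing their parameters to the present case and letting the jump parameter approach its limiting value. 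Matching then forces the $t$-independent part of the $n$-coefficient to be $\ln(2\pi)$ and that of the constant to be $2\zeta'(-1)-\tfrac13\ln2+\tfrac18\ln7$, and substituting these back gives (\ref{dnt}).

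The main obstacle is the last step: extracting the precise constant $2\zeta'(-1)-\tfrac13\ln2+\tfrac18\ln7$ from \cite{CKM} is delicate, since the Heaviside jump at $t=0$ sits exactly at the hard edge of the half-line equilibrium measure, so one has to track normalizations carefully and take the relevant limit in their formula. The remaining ingredients — integrating $\mathcal{A}$ in $n$, the coefficient matching in the Toda relation, and the $t$-integration of (\ref{dlnDnt}) — are routine but algebraically heavy.
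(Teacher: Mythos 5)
Your proposal follows essentially the same route as the paper: postulate the quarter-power expansion of $F_n(t)=-\ln D_n(t)$ from the Coulomb fluid relation (\ref{pd}) and (\ref{Azhangk}), fix all coefficients except those of $n$ and $1$ via $\ln\bt_n=2F_n(t)-F_{n+1}(t)-F_{n-1}(t)$ together with (\ref{bshiji1}), and then determine the two remaining constants by integrating (\ref{dlnDnt}) from $0$ to $t$ and matching against the asymptotics of $\ln D_n(0)$ obtained from $D_n(0)=n^{n^2/4}\widetilde D_n^{(2)}(0)$ and the Claeys--Krasovsky--Minakov result with $\al=0$, $\bt=8$, $\mu\to0^{+}$. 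Your observation that the two undetermined coefficients lie in the kernel of the second-difference operator correctly explains why the Toda step cannot fix them, and the stated leading terms of $F[\s]$ agree with the paper.
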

\begin{proof}
Similarly as in the $m=1$ case, we have the large $n$ expansion form of $F_n(t)=-\ln D_n(t)$ from (\ref{pd}) and (\ref{Azhangk}) by using Dyson's Coulomb fluid approach:
\be\label{fna1}
F_n(t)=c_{10}n^2\ln n+c_9  \ln n+\sum_{j=-\infty}^{8}c_{j}\:n^{j/4},
\ee
where $c_j,\;j=10, 9, 8, \ldots$ are the expansion coefficients to be determined.

Substituting (\ref{bshiji1}) and (\ref{fna1}) into the identity
$$
\ln\bt_n=2F_n(t)-F_{n+1}(t)-F_{n-1}(t)
$$
and letting $n\rightarrow\infty$, we obtain the expansion coefficients $c_j$ (except $c_4$ and $c_0$) by equating coefficients of powers of $n$ on both sides.
The large $n$ asymptotic expansion for $\ln D_n(t)=-F_n(t)$ reads
\begin{align}
\ln D_n(t)  =&\:\frac{1}{4}n^2\ln n-\left(\frac{3}{8}+\ln \big(\sqrt{2}\:\ka\big)\right)n^2-\frac{16 \sqrt{2}\:\ka\:t\: n^{7/4}}{49}-\frac{24 t^2 n^{3 / 2}}{7 \ka^2}-\frac{172 \sqrt{2}\: t^3 n^{5/4}}{245 \ka} \no\\[5pt]
 &-c_4n-\frac{87 t^5n^{3/4}}{49\sqrt{2}\:\ka^3}-\frac{484 t^6 n^{1 / 2}}{1715\ka^2}-\frac{208497t^7 n^{1/4}}{2352980\sqrt{2}\:\ka}-\frac{1}{6}\ln n-c_0\no\\[6pt]
 &-\frac{ t (7565431 t^8-28235760)}{112943040 \sqrt{2}\: \ka^3n^{1/4}}-\frac{ t^2 (664784 t^8+8823675)}{102942875 \ka^2n^{1/2}}+O(n^{-3/4}).\no
\end{align}

To determine the constants $c_4$ and $c_0$, we integrate (\ref{dlnDnt}) from $0$ to $t$ and have
\begin{align}\label{ra}
\ln \frac{D_n(t)}{D_n(0)}  =&-\frac{16 \sqrt{2}\:\ka\:t\: n^{7/4}}{49}-\frac{24 t^2 n^{3 / 2}}{7 \ka^2}-\frac{172 \sqrt{2}\: t^3 n^{5/4}}{245 \ka}-\frac{746 t^4 n}{1715}-\frac{87 t^5n^{3/4}}{49\sqrt{2}\:\ka^3} -\frac{484 t^6 n^{1 / 2}}{1715\ka^2}\no\\[6pt]
 &-\frac{208497t^7 n^{1/4}}{2352980\sqrt{2}\:\ka}+\frac{592569t^8}{20588575}-\frac{ t (7565431 t^8 -28235760)}{112943040 \sqrt{2}\: \ka^3n^{1/4}}-\frac{ t^2 (664784 t^8+8823675)}{102942875 \ka^2n^{1/2}}\no\\[6pt]
 &+O(n^{-3/4}).
\end{align}
Let
$$
\widetilde{D}_n^{(2)}(0):=\det\left(\int_0^{\infty} x^{i+j} \mathrm{e}^{-nx^4} d x\right)_{i, j=0}^{n-1}.
$$
Similarly as in the $m=1$ case (see Remark \ref{re}), we find
$$
D_n(0)=n^{n^2/4}\widetilde{D}_n^{(2)}(0).
$$
From Proposition 1.10 and (1.22) in \cite{CKM}, we take the special values $\al=0,\;\bt=8$ and let $\mu\rightarrow0^{+}$ to obtain the large $n$ asymptotics of $\widetilde{D}_n^{(2)}(0)$:
$$
\ln \widetilde{D}_n^{(2)}(0)=-\left(\frac{3}{8}+\ln \big(\sqrt{2}\:\ka\big)\right)n^2+n\ln(2\pi)-\frac{1}{6}\ln n+2\zeta'(-1)-\frac{1}{3}\ln 2+\frac{1}{8}\ln 7+o(1).
$$
It follows that
\be\label{dn0a}
\ln D_n(0)=\frac{1}{4}n^2\ln n-\left(\frac{3}{8}+\ln \big(\sqrt{2}\:\ka\big)\right)n^2+n\ln(2\pi)-\frac{1}{6}\ln n+2\zeta'(-1)-\frac{1}{3}\ln 2+\frac{1}{8}\ln 7+o(1).
\ee
The combination of (\ref{ra}) and (\ref{dn0a}) gives
$$
-c_4=\ln(2\pi)-\frac{746 t^4 }{1715},
$$
$$
-c_0=\frac{592569t^8}{20588575}+2\zeta'(-1)-\frac{1}{3}\ln 2+\frac{1}{8}\ln 7.
$$
The theorem is then established.
\end{proof}

Finally, we have the following result on the asymptotics of the smallest eigenvalue distribution of the quartic Freud unitary ensemble.
\begin{theorem}
For fixed $t\in\mathbb{R}$, the probability $\mathbb{P}(n, t)$ in (\ref{pnt1}) for the quartic Freud unitary ensemble has the large $n$ asymptotics
\begin{align}
\ln \mathbb{P}(n, t)  =&\:n^2\ln\frac{\sqrt[4]{3}}{\ka} -\frac{16 \sqrt{2}\:\ka\:t\: n^{7/4}}{49}-\frac{24 t^2 n^{3 / 2}}{7 \ka^2}-\frac{172 \sqrt{2}\: t^3 n^{5/4}}{245 \ka}-\frac{746 t^4 n}{1715}-\frac{87 t^5n^{3/4}}{49\sqrt{2}\:\ka^3} \no\\[6pt]
&-\frac{484 t^6 n^{1 / 2}}{1715\ka^2}-\frac{208497t^7 n^{1/4}}{2352980\sqrt{2}\:\ka}-\frac{1}{12}\ln n+\frac{592569t^8}{20588575}+\zeta'(-1)-\frac{1}{4}\ln 2+\frac{1}{8}\ln 7\no\\[8pt]
&-\frac{ t (7565431 t^8-28235760)}{112943040 \sqrt{2}\: \ka^3n^{1/4}}-\frac{ t^2 (664784 t^8+8823675)}{102942875 \ka^2n^{1/2}}+O(n^{-3/4}),\no
\end{align}
where $\kappa=\sqrt[4]{35}$, and $\zeta'(\cdot)$ is the derivative of the Riemann zeta function.
\end{theorem}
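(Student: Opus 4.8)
The plan is to use the identity $\ln \mathbb{P}(n,t)=\ln D_n(t)-\ln D_n(-\infty)$ coming from (\ref{pnt}), exactly the step (\ref{pnt21}) used in the $m=1$ case. The numerator is the content of the preceding theorem, equation (\ref{dnt}), so the whole statement reduces to the large-$n$ behaviour of the denominator $\ln D_n(-\infty)$, i.e.\ the logarithm of the partition function $D_n[\mathrm{e}^{-x^4}]$ of the quartic Freud unitary ensemble. Since $D_n(-\infty)$ carries no $t$-dependence, this subtraction leaves untouched every $t$-dependent coefficient of (\ref{dnt}); in particular all the terms of order $n^{7/4},n^{3/2},n^{5/4},n^{3/4},n^{1/2},n^{1/4},n^{-1/4},n^{-1/2}$, together with the term $\tfrac{592569}{20588575}t^8$, will reappear verbatim in $\ln\mathbb{P}(n,t)$.

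Next I would bring in the asymptotics of $\ln D_n(-\infty)$. There is no Barnes-$G$ closed form here as in Section~2, but the large-$n$ expansion (with more than enough accuracy) is precisely what was obtained in \cite{MWC} and quoted in the Introduction; after the rescaling $x\mapsto n^{-1/4}x$ --- which, exactly as in Remark \ref{re}, identifies $D_n[\mathrm{e}^{-x^4}]$ with $n^{n^2/4}D_n[\mathrm{e}^{-nx^4}]$ --- it is the standard large-$n$ asymptotics of the normalized quartic Freud partition function. The form one needs is
\be
\ln D_n(-\infty)=\frac14 n^2\ln n-\left(\frac38+\frac12\ln 2+\frac14\ln 3\right)n^2+n\ln(2\pi)-\frac{1}{12}\ln n+\zeta'(-1)-\frac{1}{12}\ln 2+o(n^{-3/4}),
\ee
whose error term does not affect any of the orders displayed in the statement.

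Finally I would subtract (\ref{dnt}) minus the line above. The two $\tfrac14 n^2\ln n$ terms cancel; the $n^2$ coefficient becomes $-\bigl(\tfrac38+\ln(\sqrt2\kappa)\bigr)+\bigl(\tfrac38+\tfrac12\ln2+\tfrac14\ln3\bigr)=\tfrac14\ln3-\ln\kappa=\ln\bigl(\sqrt[4]{3}/\kappa\bigr)$ since $\kappa=\sqrt[4]{35}$; the $\ln(2\pi)\,n$ contributions cancel, leaving $-\tfrac{746}{1715}t^4 n$; the $\ln n$ coefficient becomes $-\tfrac16+\tfrac1{12}=-\tfrac1{12}$; and the additive constant becomes $\bigl(2\zeta'(-1)-\tfrac13\ln2+\tfrac18\ln7\bigr)-\bigl(\zeta'(-1)-\tfrac1{12}\ln2\bigr)=\zeta'(-1)-\tfrac14\ln2+\tfrac18\ln7$, using $-\tfrac13+\tfrac1{12}=-\tfrac14$. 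Collecting these with the unchanged fractional-power terms from (\ref{dnt}) yields the claimed expansion. The only genuinely delicate point is that the $t$-independent constants --- the $n^2$ coefficient, the $\ln n$ coefficient, and the constant term involving $\zeta'(-1)$, $\ln 2$ and $\ln 7$ --- must be obtained by quoting the expansion of $\ln D_n(-\infty)$ from \cite{MWC} with full precision; the rest is bookkeeping.
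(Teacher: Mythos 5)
Your proposal is correct and follows essentially the same route as the paper: write $\ln\mathbb{P}(n,t)=\ln D_n(t)-\ln D_n(-\infty)$, substitute the expansion (\ref{dnt}) for the numerator, and quote the large-$n$ asymptotics of $\ln D_n(-\infty)$ from \cite{MWC} (Proposition 2.3 there, whose $n^2$-coefficient $-\bigl(\tfrac38+\tfrac14\ln 12\bigr)$ agrees with your $-\bigl(\tfrac38+\tfrac12\ln2+\tfrac14\ln3\bigr)$). Your bookkeeping of the $n^2$, $\ln n$ and constant terms matches the paper's result exactly.
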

\begin{proof}
From (\ref{pnt}) we have
\be\label{pnt2}
\ln \mathbb{P}(n, t)=\ln D_n(t)-\ln D_n(-\infty).
\ee
The large $n$ asymptotic expansion (with higher order terms) of $\ln D_n(-\infty)$ was recently obtained in \cite[Proposition 2.3]{MWC}:
\begin{align}\label{dnm}
\ln D_n(-\infty)=&\:\frac{1}{4}n^2\ln n -\left(\frac{3}{8}+\frac{1}{4}\ln 12\right)n^2+n \ln (2 \pi)-\frac{1}{12}\ln n +\zeta^{\prime}(-1)-\frac{1}{12}\ln 2\no\\[8pt]
&-\frac{89}{11520 n^2}+\frac{6619}{2322432 n^4}+O(n^{-6}).
\end{align}
Substituting (\ref{dnt}) and (\ref{dnm}) into (\ref{pnt2}) gives the desired result.
\end{proof}

\section{Sextic Freud Unitary Ensemble}
\subsection{Orthogonal Polynomials for the Sextic Freud Weight with a Jump}
In this section, we consider the $m=3$ case.
The weight function $w(x)$ now is
$$
w(x;t)=w_0(x)(A+B \theta(x-t)),\qquad x,\; t \in \mathbb{R},
$$
where $w_0(x)$ is the sextic Freud weight
$$
w_0(x)=\mathrm{e}^{-x^{6}},\qquad x \in \mathbb{R}.
$$
We mention that orthogonal polynomials with the (generalized) sextic Freud weight have been studied in \cite{Freud,Sheen1,Sheen2,VanAssche2,Clarkson2}. They have been found to be of great importance in the study of the continuous limit for the Hermitian matrix model in connection with the non-perturbative theory of two-dimensional quantum gravity \cite{IK,Fokas}.

Since
\be\label{v0}
\mathrm{v}_0(x)=-\ln w(x_0)=x^6,
\ee
we have
\be\label{vp3}
\frac{\mathrm{v}_0'(x)-\mathrm{v}_0'(y)}{x-y}=6 (x^4+x^3 y+x^2 y^2+x y^3+y^4).
\ee
Plugging (\ref{vp3}) into the definitions of $A_n(x)$ and $B_n(x)$ in (\ref{an}) and (\ref{bn}),  we find
\begin{equation}\label{anz3}
A_n(x)  =\frac{R_n(t)}{x-t}+6 x^4+6 x^3\alpha_n+6 x^2(\alpha_n^2+\beta_n+\beta_{n+1})+ x G_n(t) + Q_n(t),
\end{equation}
\begin{equation}\label{bnz3}
B_n(x)  =\frac{r_n(t)}{x-t}+6 x^3\beta_n+6 x^2(\alpha_{n-1}+\alpha_n) \beta_n+ x g_n(t)+ q_n(t),
\end{equation}
where $R_n(t)$ and $r_n(t)$ are given by (\ref{Rnt}) and (\ref{rnt}) respectively, and
$$
G_n(t)=\frac{6}{h_n} \int_{-\infty}^{\infty} y^3 P_n^2(y) w(y) d y,
$$
$$
Q_n(t)=\frac{6}{h_n} \int_{-\infty}^{\infty} y^4 P_n^2(y) w(y) d y,
$$
$$
g_n(t)=\frac{6}{h_{n-1}} \int_{-\infty}^{\infty} y^3 P_n(y) P_{n-1}(y) w(y) d y,
$$
$$
q_n(t)=\frac{6}{h_{n-1}} \int_{-\infty}^{\infty} y^4 P_n(y) P_{n-1}(y) w(y) d y.
$$

Substituting (\ref{anz3}) and (\ref{bnz3}) into ($S_1$), we obtain
\begin{equation}\label{s13}
r_n+r_{n+1}=\left(t-\alpha_n\right) R_n,
\end{equation}
\begin{equation}\label{s233}
R_n=q_n+q_{n+1}+\alpha_n Q_n ,
\end{equation}
\begin{equation}\label{s33}
Q_n=g_n+g_{n+1}+\alpha_n G_n ,
\end{equation}
\begin{equation}\label{s43}
G_n=6\left[\alpha_n^3+\alpha_{n-1} \beta_n+2 \alpha_n(\beta_n+\beta_{n+1})+\alpha_{n+1} \beta_{n+1}\right].
\end{equation}
Similarly, substituting (\ref{anz3}) and (\ref{bnz3}) into ($S_2'$) gives
\begin{equation}\label{s53}
r_n^2=\beta_n R_{n-1} R_n,
\end{equation}
\begin{equation}\label{s73}
g_n=6 \beta_n\left(\alpha_{n-1}^2+\alpha_{n-1} \alpha_n+\alpha_n^2+\beta_{n-1}+\beta_n+\beta_{n+1}\right),
\end{equation}
\begin{equation}\label{s83}
q_n=\beta_n\left[G_{n-1}+G_n+6\alpha_n(\alpha_{n-1}^2 +\beta_{n-1}-\beta_n)+6\alpha_{n-1}(\alpha_n^2-\beta_n+\beta_{n+1})\right],
\end{equation}
\begin{align}\label{s93}
r_n =&\:\beta_n\big[Q_{n-1}+Q_n+\alpha_{n-1}G_n+\alpha_n G_{n-1} -2 g_n+6\alpha_n^2 \beta_{n-1}-12\alpha_{n-1} \alpha_n \beta_n\no \\
&+6\alpha_{n-1}^2(\alpha_n^2+\beta_{n+1}) +6(\beta_{n-1}+\beta_n)(\beta_n+\beta_{n+1})\big]-n,
\end{align}
\begin{align}\label{s63}
&2 r_n\big[3 t^5+6 t^2(t+\alpha_{n-1}+\alpha_n) \beta_n+t g_n+q_n\big]-\beta_nR _ { n } \big[6t^4+ 6 t^2(t\alpha_{n-1}+\alpha_{n-1}^2+\beta_{n-1}+\beta_n)\no\\
&+ t G_{n-1}+Q_{n-1}\big]-\beta_nR_{n-1}\big[6t^4+6 t^2(t \alpha_n+\alpha_n^2+\beta_n+\beta_{n+1})+t G_n+Q_n\big]+\sum_{j=0}^{n-1} R_j=0,
\end{align}
\begin{align}\label{s103}
&t r_n-\beta_n\big[R_{n-1}+R_n+\alpha_{n-1}Q_n+\alpha_n Q_{n-1}-2 q_n -2 g_n(\alpha_{n-1}+\alpha_n)\no \\
&+G_{n-1}(\alpha_n^2+\beta_n+\beta_{n+1})+ G_n(\alpha_{n-1}^2+\beta_{n-1}+\beta_n)\big]+\sum_{j=0}^{n-1} \alpha_j=0.
\end{align}
\begin{remark}
From ($S_2'$) there are another three identities involving the sums $\sum_{j=0}^{n-1}G_j,\; \sum_{j=0}^{n-1}Q_j$ and $\sum_{j=0}^{n-1}(\alpha_j^2+\beta_j+\beta_{j+1})$, respectively. We do not write them down since they will not be used in the following analysis.
\end{remark}
Using the above identities, we have the following theorem similarly as in the $m=1$ and $m=2$ cases, though the results are more complicated.
\begin{theorem}
The recurrence coefficients $\alpha_n$ and $\bt_n$ satisfy the discrete system
\begin{subequations}\label{de3}
\begin{align}
&\beta_n\big[g_{n-1}+g_{n+1}+(\alpha_{n-1}+\al_n)(G_{n-1}+G_{n})+6\alpha_n^2 \beta_{n-1}-12\alpha_{n-1} \alpha_n \beta_n\no\\
&+6\alpha_{n-1}^2(\alpha_n^2+\beta_{n+1}) +6(\beta_{n-1}+\beta_n)(\beta_n+\beta_{n+1})\big]\no\\
&+\beta_{n+1}\big[g_{n}+g_{n+2}+(\alpha_{n}+\al_{n+1})(G_{n}+G_{n+1})+6\alpha_{n+1}^2 \beta_{n}-12\alpha_{n} \alpha_{n+1} \beta_{n+1}\no\\
&+6\alpha_{n}^2(\alpha_{n+1}^2+\beta_{n+2}) +6(\beta_{n}+\beta_{n+1})(\beta_{n+1}+\beta_{n+2})\big]-2n-1\no\\
&=(t-\al_n)\Big\{\beta_n\big[G_{n-1}+G_n+6\alpha_n(\alpha_{n-1}^2 +\beta_{n-1}-\beta_n)+6\alpha_{n-1}(\alpha_n^2-\beta_n+\beta_{n+1})\big]\no\\
&+\beta_{n+1}\big[G_{n}+G_{n+1}+6\alpha_{n+1}(\alpha_{n}^2 +\beta_{n}-\beta_{n+1})+6\alpha_{n}(\alpha_{n+1}^2-\beta_{n+1}+\beta_{n+2})\big]\no\\
&+\al_n(g_n+g_{n+1}+\alpha_n G_n)\Big\},
\end{align}
\begin{align}
&\Big\{\beta_n\big[g_{n-1}+g_{n+1}+(\alpha_{n-1}+\al_n)(G_{n-1}+G_{n})+6\alpha_n^2 \beta_{n-1}-12\alpha_{n-1} \alpha_n \beta_n\no\\
&+6\alpha_{n-1}^2(\alpha_n^2+\beta_{n+1}) +6(\beta_{n-1}+\beta_n)(\beta_n+\beta_{n+1})\big]-n\Big\}^2\no\\
&=\bt_n\Big\{\beta_{n-1}\big[G_{n-2}+G_{n-1}+6\alpha_{n-1}(\alpha_{n-2}^2 +\beta_{n-2}-\beta_{n-1})+6\alpha_{n-2}(\alpha_{n-1}^2-\beta_{n-1}+\beta_{n})\big]\no\\
&+\beta_{n}\big[G_{n-1}+G_{n}+6\alpha_{n}(\alpha_{n-1}^2 +\beta_{n-1}-\beta_{n})+6\alpha_{n-1}(\alpha_{n}^2-\beta_{n}+\beta_{n+1})\big]\no\\
&+\al_{n-1}(g_{n-1}+g_{n}+\alpha_{n-1} G_{n-1})\Big\}\Big\{\beta_n\big[G_{n-1}+G_n+6\alpha_n(\alpha_{n-1}^2 +\beta_{n-1}-\beta_n)\no\\
&+6\alpha_{n-1}(\alpha_n^2-\beta_n+\beta_{n+1})\big]+\beta_{n+1}\big[G_{n}+G_{n+1}+6\alpha_{n+1}(\alpha_{n}^2 +\beta_{n}-\beta_{n+1})\no\\
&+6\alpha_{n}(\alpha_{n+1}^2-\beta_{n+1}+\beta_{n+2})\big]+\al_n(g_n+g_{n+1}+\alpha_n G_n)\Big\},
\end{align}
\end{subequations}
where $G_n$ and $g_n$ are given by (\ref{s43}) and (\ref{s73}), respectively.
\end{theorem}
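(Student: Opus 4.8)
The plan is to follow the template already used for $m=1$ (Lemma and Theorem of Section~2) and for $m=2$ (Lemma~\ref{le} and Theorem~\ref{the}): first eliminate all of the auxiliary quantities $G_n,\,g_n,\,Q_n,\,q_n,\,R_n,\,r_n$ in favour of the recurrence coefficients $\alpha_n$ and $\beta_n$ (retaining $G_n$ and $g_n$ as abbreviations), and then substitute the resulting closed forms into the two identities among the auxiliary quantities that still carry the essential structure, namely (\ref{s13}) and (\ref{s53}). These two are the ``master'' relations: (\ref{s13}) carries the explicit factor $(t-\alpha_n)$ and yields the first equation of (\ref{de3}), while (\ref{s53}), $r_n^2=\beta_nR_{n-1}R_n$, yields the second. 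No ingredient beyond the compatibility conditions ($S_1$) and ($S_2'$) already in hand is needed.

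Concretely, I would proceed along the elimination chain $G,g\rightarrow Q,q\rightarrow R,r$. The quantities $G_n$ and $g_n$ are already in closed form via (\ref{s43}) and (\ref{s73}). Then (\ref{s33}) gives $Q_n=g_n+g_{n+1}+\alpha_nG_n$ and (\ref{s83}) gives $q_n$ in terms of $G_{n-1},G_n$ and low-degree polynomials in the $\alpha$'s and $\beta$'s, so both $Q_n$ and $q_n$ become explicit. Feeding these into (\ref{s233}), $R_n=q_n+q_{n+1}+\alpha_nQ_n$, realizes $R_n$ as the bracketed factor that appears on the right-hand side of the second equation of (\ref{de3}); and substituting (\ref{s33}) into (\ref{s93}) and collecting terms collapses $r_n$ to $\beta_n[\,\cdots\,]-n$, where the bracket is exactly the expression that appears squared on the left of the second equation of (\ref{de3}).

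Finally, with $R_n$ and $r_n$ written purely in terms of $\alpha$'s and $\beta$'s, I would substitute $r_n,\,r_{n+1},\,R_n$ into (\ref{s13}) to obtain the first equation of (\ref{de3}), and $r_n,\,R_{n-1},\,R_n$ into (\ref{s53}) to obtain the second, propagating the index shifts $n\mapsto n\pm1,\,n\pm2$ through the formulas for $G,g,Q,q$. The main obstacle is purely one of bookkeeping: the sextic potential $\mathrm{v}_0(x)=x^6$ produces six auxiliary quantities, and the elimination chain introduces a wide window of shifted indices (stretching from $\alpha_{n-2},\beta_{n-2}$ up to $\alpha_{n+2},\beta_{n+2}$, and one further step in $r_{n+1}$), so one must track a large number of terms and verify that every index range is admissible; at the low-$n$ end this is guaranteed by the conventions $P_0=1$, $\beta_0P_{-1}=0$ carried over from Section~1. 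There is no conceptual difficulty beyond this systematic substitution.
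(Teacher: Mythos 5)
Your proposal follows exactly the paper's route: the authors likewise use (\ref{s33}) and (\ref{s83}) to eliminate $Q_n$ and $q_n$ from (\ref{s233}) and (\ref{s93}), obtaining closed forms for $R_n$ and $r_n$ in terms of $\alpha_n,\beta_n$ (with $G_n,g_n$ kept as abbreviations), and then substitute these into (\ref{s13}) and (\ref{s53}) to produce the two equations of (\ref{de3}). The proposal is correct and essentially identical to the paper's proof.
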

\begin{proof}
Substituting (\ref{s33}) and (\ref{s83}) into (\ref{s233}) and (\ref{s93}), we have
\begin{align}\label{Rne}
R_n=&\:\beta_n\left[G_{n-1}+G_n+6\alpha_n(\alpha_{n-1}^2 +\beta_{n-1}-\beta_n)+6\alpha_{n-1}(\alpha_n^2-\beta_n+\beta_{n+1})\right]\no\\
&+\beta_{n+1}\left[G_{n}+G_{n+1}+6\alpha_{n+1}(\alpha_{n}^2 +\beta_{n}-\beta_{n+1})+6\alpha_{n}(\alpha_{n+1}^2-\beta_{n+1}+\beta_{n+2})\right]\no\\
&+\al_n(g_n+g_{n+1}+\alpha_n G_n),
\end{align}
\begin{align}\label{rne}
r_n=&\:\beta_n\big[g_{n-1}+g_{n+1}+(\alpha_{n-1}+\al_n)(G_{n-1}+G_{n})+6\alpha_n^2 \beta_{n-1}-12\alpha_{n-1} \alpha_n \beta_n\no\\
&+6\alpha_{n-1}^2(\alpha_n^2+\beta_{n+1}) +6(\beta_{n-1}+\beta_n)(\beta_n+\beta_{n+1})\big]-n.
\end{align}
The theorem is established by substituting (\ref{Rne}) and (\ref{rne}) into (\ref{s13}) and (\ref{s53}), respectively.
\end{proof}
\begin{remark}
The differential-difference equations for the recurrence coefficients can be obtained by substituting (\ref{rne}) and (\ref{Rne}) into (\ref{eq2}) and (\ref{eq3}), respectively.
\end{remark}
\subsection{Large $n$ Asymptotics of the Smallest Eigenvalue Distribution of the Sextic Freud Unitary Ensemble}
In this subsection, we consider the smallest eigenvalue distribution of the sextic Freud unitary ensemble. This corresponds to $A=0, B=1$ and we have $\mathcal{D}_n(t)=D_n(t)$.
Similarly as the development in Sections 2.2 and 3.2, we derive the large $n$ asymptotics of the recurrence coefficients, the Hankel determinant $D_n(t)$ and the probability $\mathbb{P}(n,t)$ by using Dyson's Coulomb fluid approach.

Substituting (\ref{v0}) into (\ref{sup2}), we now obtain a sextic equation satisfied by $b$,
$$
3 \left(231 b^6+126 b^5 t+105 b^4 t^2+100 b^3 t^3+105 b^2 t^4+126 b t^5+231 t^6\right)=1024 n.
$$
This equation has a unique solution subject to the condition $b>t$, and the large $n$ series expansion is
\begin{align}\label{b1}
b=&\:\frac{2^{5/3} n^{1/6}}{3^{1/3} \mu}-\frac{t}{11}-\frac{35\times2^{1/3} t^2}{11\times 3^{2/3}\mu^{5} n^{1/6}}-\frac{260\times 2^{2/3} t^3}{363\times 3^{1/3}\mu^4n^{1/3}}-\frac{2105 t^4}{5324 \mu^3n^{1/2}}-\frac{17420\times2^{1/3} t^5}{43923\times 3^{2/3}\mu^2n^{2/3}}\no\\[8pt]
&-\frac{33986515  t^6}{81169704 \times6^{1/3} \mu n^{5/6}}-\frac{33007175  t^8}{30921792 \times6^{2/3} \mu^5n^{7/6}}-\frac{223856750\times 2^{2/3} t^9}{1004475087 \times3^{1/3}\mu^4n^{4/3} }+O(n^{-3/2}),
\end{align}
where $\mu=\sqrt[6]{77}$.
It follows that
\begin{align}\label{f11}
\frac{t+b}{2}=&\:\frac{2^{2/3} n^{1/6}}{3^{1/3} \mu}+\frac{5t}{11}-\frac{35 t^2}{11\times 6^{2/3}\mu^{5} n^{1/6}}-\frac{130\times 2^{2/3} t^3}{363\times 3^{1/3}\mu^4n^{1/3}}-\frac{2105 t^4}{10648 \mu^3n^{1/2}}-\frac{8710\times2^{1/3} t^5}{43923\times 3^{2/3}\mu^2n^{2/3}}\no\\[8pt]
&-\frac{33986515  t^6}{162339408 \times6^{1/3} \mu n^{5/6}}-\frac{33007175  t^8}{61843584 \times6^{2/3} \mu^5n^{7/6}}+O(n^{-4/3}),
\end{align}
\begin{align}\label{f22}
\left(\frac{b-t}{4}\right)^2=&\:\frac{n^{1/3}}{6^{2/3} \mu^2}-\frac{ 6^{2/3} t n^{1/6}}{11 \mu}+\frac{49 t^2}{726}+\frac{5  t^3}{33\times 6^{2/3} \mu^5n^{1/6}}+\frac{25 t^4}{7986 \times6^{1/3} \mu^4n^{1/3}}-\frac{3415  t^5}{351384 \mu^3 n^{1/2}}\no\\[8pt]
&-\frac{35865\times3^{1/3}   t^6}{1127357\times 2^{2/3}\mu^2n^{2/3}}+\frac{106040245  t^7}{1785733488\times 6^{1/3}  \mu n^{5/6}}+O(n^{-7/6}).
\end{align}
Similarly as in the $m=1$ and $m=2$ cases, the Lagrange multiplier $\mathcal{A}$ now is
$$
\mathcal{A}=\frac{1}{\pi}\int_{t}^{b}\frac{\mathrm{v}_0(x)}{\sqrt{(b-x)(x-t)}}dx-2n\ln\frac{b-t}{4}=\frac{n}{3}- 2 n\ln \frac{b-t}{4}.
$$
Using (\ref{b1}), it follows that as $n\rightarrow\infty$,
\begin{align}\label{a3}
\mathcal{A}=&-\frac{1}{3} n \ln n+\frac{1+\ln 2772}{3}n+\frac{ 7\times6^{4/3}  t n^{5/6}}{\mu^5}+\frac{59\mu^2 t^2 n^{2 / 3}}{121\times6^{1/3}}+\frac{2843  t^3n^{1/2}}{363 \mu^3}
+\frac{30557  t^4n^{1/3}}{2662\times 6^{2/3} \mu^2}\no\\[8pt]
&+\frac{2751989  t^5n^{1/6}}{878460 \times6^{1/3} \mu}+t^6+\frac{42796927  t^7}{351384\times6^{2/3} \mu^5n^{1/6}}-\frac{155704532099 t^8}{127552392 \times6^{1/3} \mu^4n^{1 / 3}}+O(n^{-1/2}).
\end{align}
\begin{theorem}
The recurrence coefficients $\alpha_n$ and $\beta_{n}$ have the following large $n$ asymptotic expansions:
\begin{align}\label{alp1}
\alpha_n=&\:\frac{ 2^{2 / 3} n^{1 / 6}}{3^{1 / 3} \mu}+\frac{6t}{11}+\frac{175 t^2}{66\times 6^{2 / 3} \mu^5 n^{1 / 6}}+\frac{25 \times 2^{2 / 3} t^3}{121 \times \times 3^{1 / 3} \mu^4 n^{1 / 3}}+\frac{27235 t^4}{383328  \mu^3 n^{1 / 2}} +\frac{1802\times 2^{1 / 3} t^5}{43923  \times 3^{2 / 3} \mu^2 n^{2 / 3}}\no\\[8pt]
&+\frac{5844218688 +828363115 t^6}{35065312128 \times 6^{1 / 3} \mu n^{5 / 6}}-\frac{25 t^2 (537101669 t^6+2125170432)}{240447854592\times 6^{2/3} \mu^5n^{7/6}}+O(n^{-4/3}),
\end{align}
\begin{align}\label{bet1}
\beta_{n} =&\:\frac{n^{1 / 3}}{6^{2 / 3} \mu^2}-\frac{5 t n^{1 / 6}}{11 \times 6^{1 / 3} \mu}+\frac{125 t^2}{2178}-\frac{25 t^3}{132 \times 6^{2 / 3} \mu^5 n^{1 / 6}}  -\frac{685 t^4}{35937 \times 6^{1 / 3} \mu^4 n^{1 / 3}}-\frac{10847 t^5}{8433216 \mu^3 n^{1 / 2}}\no\\[8pt]
&+\frac{924955 \times 2^{1 / 3} t^6}{273947751 \times 3^{2 / 3}\mu^2 n^{2 / 3}}-\frac{3813104975 t^7}{771436866816 \times 6^{1 / 3} \mu n^{5 / 6}}+\frac{148190928535 t^9}{37028969607168\times6^{2/3} \mu^5 n^{7/6}}\no\\[5pt]
&+O(n^{-4/3}),
\end{align}
where $\mu=\sqrt[6]{77}$.
\end{theorem}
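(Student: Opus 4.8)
The argument follows the template already used for $m=1$ and $m=2$ (compare the proof of Theorem~\ref{thm} and its Gaussian counterpart): one first extracts the \emph{form} of the two expansions from Dyson's Coulomb fluid heuristics, and then pins down every coefficient by matching powers of $n$ in the exact discrete system (\ref{de3}). Concretely, from (\ref{al}) and (\ref{beta}) together with the series (\ref{f11}), (\ref{f22}) and (\ref{a3}) --- all of which are organised in powers of $n^{-1/6}$, because $b$ in (\ref{b1}) is --- one is led to the ansatz
\[
\al_n=a_{-1}\,n^{1/6}+a_0+\sum_{j\ge 1}\frac{a_j}{n^{j/6}},\qquad
\bt_n=b_{-2}\,n^{1/3}+b_{-1}\,n^{1/6}+b_0+\sum_{j\ge 1}\frac{b_j}{n^{j/6}},
\]
where the two leading coefficients are already fixed by (\ref{f11}) and (\ref{f22}), namely $a_{-1}=2^{2/3}/(3^{1/3}\mu)$ and $b_{-2}=1/(6^{2/3}\mu^2)$ with $\mu=\sqrt[6]{77}$.

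The next step is to substitute this ansatz into the two difference equations in (\ref{de3}). Here one has to expand about the point $n$ every shifted quantity that occurs: $\al_{n\pm1},\al_{n\pm2},\bt_{n\pm1},\bt_{n\pm2}$ directly, and, through the auxiliary polynomials $G_n$ and $g_n$ of (\ref{s43}) and (\ref{s73}) which themselves appear at shifted indices in (\ref{de3}), also shifts by up to three units --- using $(n+k)^{1/6}=n^{1/6}(1+k/n)^{1/6}$ and the analogous binomial expansions. After this, each side of (\ref{de3}) becomes a formal Laurent series in $n^{1/6}$ whose coefficients are polynomials in $t$ and in the unknowns $a_j,b_j$. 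Equating coefficients of equal powers of $n$, the top powers must cancel identically (this is the consistency of the Coulomb fluid leading order with the exact recurrence, and provides a first sanity check), and each lower power then yields a $2\times2$ linear system determining the next pair $(a_j,b_j)$ from the already-known coefficients. Solving these systems recursively produces exactly the coefficients displayed in (\ref{alp1}) and (\ref{bet1}).

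The only genuine difficulty is computational rather than conceptual: the relations in (\ref{de3}) are very large polynomial identities, the index shifts must be carried to the order $n^{-7/6}$ demanded by the stated expansions, and the algebra is realistically manageable only with a computer algebra system. One should also check that the $2\times2$ system arising at each order is nonsingular, so that $(a_j,b_j)$ is uniquely determined; this is visible from the fact that the two equations in (\ref{de3}) couple $a_j$ and $b_j$ through an invertible linear combination once $a_{-1}$ and $b_{-2}$ have been fixed by the leading-order matching. Finally, the resulting series can be cross-checked by feeding it into the differential-difference equations obtained from (\ref{eq2}), (\ref{eq3}), (\ref{rne}) and (\ref{Rne}), and its $t=0$ specialisation is consistent, via (\ref{bd}), with the known large $n$ asymptotics of $D_n(0)$.
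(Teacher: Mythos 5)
Your proposal is correct and follows essentially the same route as the paper: the Coulomb fluid formulas (\ref{al}), (\ref{beta}) together with (\ref{f11}), (\ref{f22}) and (\ref{a3}) fix the ansatz in powers of $n^{-1/6}$ and the leading coefficients $a_{-1}$, $b_{-2}$, and substitution into the discrete system (\ref{de3}) then determines the remaining coefficients recursively by equating powers of $n$. The additional remarks on index shifts, the order-by-order linear systems, and the cross-checks are accurate elaborations of what the paper leaves implicit.
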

\begin{proof}
From (\ref{al}), (\ref{beta}), (\ref{f11}), (\ref{f22}) and (\ref{a3}), we find that $\al_n$ and $\bt_{n}$ have the large $n$ expansions of the forms
\be\label{alp}
\alpha_n=a_{-1}n^{1/6}+a_0+\sum_{j=1}^{\infty}\frac{a_{j}}{n^{j/6}},
\ee
\be\label{bet}
\bt_n=b_{-2}n^{1/3}+b_{-1}n^{1/6}+b_0+\sum_{j=1}^{\infty}\frac{b_{j}}{n^{j/6}},
\ee
where
$$
a_{-1}=\frac{ 2^{2 / 3}}{3^{1 / 3} \mu},\qquad\qquad b_{-2}=\frac{1}{6^{2 / 3} \mu^2}.
$$
Substituting (\ref{alp}) and (\ref{bet}) into the discrete system (\ref{de3}) and taking a large $n$ limit, we obtain the expansion coefficients $a_j$ and $b_j$ recursively by equating the powers of $n$. This completes the proof.
\end{proof}
\begin{theorem}
The logarithmic derivative of the Hankel determinant, $\frac{d}{d t} \ln D_n(t)$, has the following expansion as $n\rightarrow\infty$:
\begin{align}\label{ddn}
\frac{d}{d t} \ln D_n(t)  =&-\frac{252 \times 6^{1 / 3}  n^{11 / 6}}{11 \mu^5}-\frac{70 \times 6^{2 /3	}  t n^{5 / 3}}{11 \mu^4}-\frac{3775 t^{2} n^{3 / 2}}{363 \mu^3}-\frac{11435 \times 2^{1 / 3} t^{3} n^{4 / 3}}{1331 \times 3^{2 / 3} \mu^2}\no\\[8pt]
&-\frac{4133705 t^{4} n^{7 / 6}}{819896 \times 6^{1 / 3} \mu}-\frac{20575838 t^{5} n}{10146213} -\frac{265978685 t^{6} n^{5 / 6}}{4216608 \times 6^{2 / 3} \mu^5}-\frac{83337950 \times 2^{2 / 3} t^{7} n^{2 / 3}}{13045131 \times 3^{1 / 3} \mu^4}\no\\[8pt]
&-\frac{32420134455 t^{8} n^{1 / 2}}{12698549248 \mu^3}-\frac{7123199480 \times 2^{1 / 3} t^{9} n^{1 / 3}}{4735382553 \times 3^{2 / 3} \mu^2} -\frac{9522701646921115 t^{10} n^{1 / 6}}{16428519515713536 \times 6^{1 / 3} \mu}\no\\[8pt]
&+\frac{146604389147750 t^{11}}{377467340218353}-\frac{5 (12732126804149495465 t^{12}-7589976016259653632)}{13011387456445120512\times 6^{2/3} \mu^5n^{1/6}}\no\\[4pt]
&+O(n^{-1/3}),
\end{align}
where $\mu=\sqrt[6]{77}$.
\end{theorem}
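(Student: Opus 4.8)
The plan is to repeat, for $m=3$, the derivation already carried out for $m=1$ and $m=2$. The starting point is the identity (\ref{eq4}), which gives $\frac{d}{dt}\ln D_n(t)=-\sum_{j=0}^{n-1}R_j(t)$, so it suffices to evaluate the sum $\sum_{j=0}^{n-1}R_j(t)$. That sum is isolated by the compatibility relation (\ref{s63}): solving (\ref{s63}) for $\sum_{j=0}^{n-1}R_j(t)$ expresses it, and hence $\frac{d}{dt}\ln D_n(t)$, in terms of $t$, the recurrence coefficients $\alpha_n,\beta_n$ (together with a bounded number of their neighbours), and the auxiliary quantities $R_n,r_n,G_n,Q_n,g_n,q_n$ and their shifts.

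The second step is to eliminate all auxiliary quantities in favour of the recurrence coefficients. For this I would substitute $G_n$ from (\ref{s43}), $g_n$ from (\ref{s73}), then $Q_n=g_n+g_{n+1}+\alpha_n G_n$ from (\ref{s33}), $q_n$ from (\ref{s83}), and finally $R_n$ from (\ref{Rne}) and $r_n$ from (\ref{rne}). After these substitutions $\frac{d}{dt}\ln D_n(t)$ becomes an explicit polynomial in $t$ and in $\{\alpha_{n+k},\beta_{n+k}\}$ over a bounded range of shifts $k$. A quick degree count (with $\beta_{n+k}\sim n^{1/3}$, $\alpha_{n+k}\sim n^{1/6}$, hence $G_{n+k}\sim n^{1/2}$ and $Q_{n+k}\sim n^{2/3}$) shows the dominant term $\beta_n R_n Q_{n-1}$ is of order $n^{11/6}$, consistent with the leading term in (\ref{ddn}).

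The third and final step is asymptotic: insert the expansions (\ref{alp1}) and (\ref{bet1}), together with the expansions of the shifted coefficients $\alpha_{n+k},\beta_{n+k}$ obtained by the replacement $n\mapsto n+k$ followed by re-expansion in powers of $n^{-1/6}$, and collect powers of $n$. Matching all the displayed terms, down to the $O(n^{-1/3})$ error, forces one to know $\alpha_n$ and $\beta_n$ to correspondingly high order, which is supplied by iterating the discrete system (\ref{de3}) far enough; note that in practice the expansions of $\alpha_n,\beta_n$ must be carried well beyond the last term actually displayed in (\ref{alp1})--(\ref{bet1}) in order to recover the lower-order coefficients of (\ref{ddn}) after cancellation. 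As a partial check, the leading coefficient $-\frac{252\cdot 6^{1/3}}{11\mu^5}$ should agree with the Coulomb-fluid estimate of $-\sum_{j=0}^{n-1}R_j(t)$ coming from $-\partial_t F[\sigma]$ and (\ref{a3}).

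The main obstacle is not conceptual but computational: the relation (\ref{s63}) and the substituted forms of $q_n$, $Q_n$, $R_n$, $r_n$ are already bulky, their products proliferate into very many monomials, and the dominant powers of $n$ cancel heavily, so the intermediate expansions of $\alpha_n,\beta_n$ must be retained to high precision to retrieve the subleading coefficients correctly. Executing steps two and three by symbolic computation is essentially mandatory; everything else follows mechanically from the identities established earlier in this section.
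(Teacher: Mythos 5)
Your proposal follows the paper's proof essentially verbatim: the paper likewise starts from (\ref{eq4}), solves (\ref{s63}) for $\sum_{j=0}^{n-1}R_j(t)$, eliminates the auxiliary quantities via (\ref{Rne}), (\ref{rne}), (\ref{s33}) and (\ref{s83}) with the aid of (\ref{s43}) and (\ref{s73}), and then substitutes the expansions (\ref{alp1}) and (\ref{bet1}). Your added remarks — the leading-order degree count and the need to carry the expansions of $\alpha_n,\beta_n$ beyond the displayed terms — are correct practical observations but do not change the route.
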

\begin{proof}
From (\ref{eq4}) we have
$$
\frac{d}{d t} \ln D_n(t)=-\sum_{j=0}^{n-1} R_j(t).
$$
By making use of (\ref{s63}), we find
\begin{align}
\frac{d}{d t} \ln D_n(t)=&\:2 r_n\big[3 t^5+6 t^2(t+\alpha_{n-1}+\alpha_n) \beta_n+t g_n+q_n\big]-\beta_nR _ { n } \big[6t^4+ 6 t^2(t\alpha_{n-1}+\alpha_{n-1}^2+\beta_{n-1}+\beta_n)\no\\
&+ t G_{n-1}+Q_{n-1}\big]-\beta_nR_{n-1}\big[6t^4+6 t^2(t \alpha_n+\alpha_n^2+\beta_n+\beta_{n+1})+t G_n+Q_n\big].\no
\end{align}
Substituting (\ref{Rne}), (\ref{rne}), (\ref{s33}) and (\ref{s83}) into the above, we can express $\frac{d}{d t} \ln D_n(t)$ in terms of the recurrence coefficients $\alpha_n$ and $\bt_n$ with the aid of (\ref{s43}) and (\ref{s73}). By making use of (\ref{alp1}) and (\ref{bet1}), we obtain (\ref{ddn}).
\end{proof}
\begin{remark}
Substituting (\ref{sum}) into (\ref{s103}), one can express the sub-leading coefficient $\mathrm{p}(n,t)$ in terms of the recurrence coefficients $\al_n$ and $\bt_n$. Then the large $n$ asymptotics of $\mathrm{p}(n,t)$ can be derived by using (\ref{alp1}) and (\ref{bet1}).
\end{remark}

\begin{theorem}
The Hankel determinant $D_n(t)$ has the large $n$ asymptotics
\begin{align}\label{dnt3}
\ln D_n(t) =& \:\frac{1}{6}n^{2} \ln n-\Big(\frac{1}{4}+ \ln (6^{1/3}\mu)\Big) n^{2}-\frac{252 \times 6^{1 / 3}  t n^{11 / 6}}{11\mu^5}-\frac{35 \times 6^{2 / 3}  t^{2} n^{5 / 3}}{11\mu^4} -\frac{3775 t^{3} n^{3 / 2}}{1089 \mu^3}\no\\[8pt]
&-\frac{11435 t^{4} n^{4 / 3}}{2662 \times 6^{2/3} \mu^2}-\frac{826741 t^{5} n^{7 / 6}}{819896 \times 6^{1 / 3} \mu}+  \Big(\ln(2\pi)-\frac{10287919 t^6 }{30438639}\Big)n-\frac{37996955 t^{7} n^{5/6}}{4216608 \times 6^{2 / 3} \mu^5} \no\\[8pt]
&-\frac{41668975 t^{8} n^{2 / 3}}{26090262 \times 6^{1 / 3} \mu^4} -\frac{10806711485 t^{9} n^{1 / 2}}{38095647744 \mu^3}-\frac{712319948 \times 2^{1 / 3} t^{10} n^{1 / 3}}{4735382553 \times 3^{2 / 3} \mu^2} \no\\[8pt]
&-\frac{9522701646921115 t^{11} n^{1 / 6}}{180713714672848896 \times 6^{1 / 3} \mu}-\frac{1}{6} \ln n+\frac{73302194573875 t^{12}}{2264804041310118}+2\zeta'(-1)\no\\[8pt]
&-\frac{1}{6}\ln 6+\frac{1}{8}\ln 11-\frac{5 t (12732126804149495465 t^{12}-98669688211375497216)}{169148036933786566656\times 6^{2/3}\mu^5 n^{1/6}}\no\\[8pt]
&-\frac{5 t^2 (1593222711395152 t^{12}+661793388694515)}{157242109153816764\times6^{1/3} \mu^4n^{1/3}}+O(n^{-1/2}),
\end{align}
where $\mu=\sqrt[6]{77}$, and $\zeta'(\cdot)$ is the derivative of the Riemann zeta function.
\end{theorem}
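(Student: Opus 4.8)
The plan is to mirror, step for step, the argument already carried out in the $m=1$ and $m=2$ cases. The first step is to fix the \emph{form} of the expansion of $F_n(t):=-\ln D_n(t)$ from Dyson's Coulomb fluid approximation. Since $F_n(t)$ is approximated by the free energy $F[\sigma]$ of (\ref{fe}) for large $n$, and $\partial F[\sigma]/\partial n=\mathcal{A}$ by (\ref{pd}), integrating the expansion (\ref{a3}) of the Lagrange multiplier in $n$ shows that $F[\sigma]$, and hence $F_n(t)$, must have the shape
\[
F_n(t)=c_{14}\,n^2\ln n+c_{13}\,\ln n+\sum_{j=-\infty}^{12}c_j\,n^{j/6},
\]
with all $c_j$ functions of $t$ to be determined (the powers advancing in steps of $n^{1/6}$, and $n^2=n^{12/6}$ appearing as the $j=12$ term).

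The second step determines all but two of the $c_j$ from the exact relation $\ln\beta_n=2F_n(t)-F_{n+1}(t)-F_{n-1}(t)$, which is immediate from (\ref{bd}). Substituting the large $n$ expansion (\ref{bet1}) of $\beta_n$ on the left and the ansatz above on the right, expanding $F_{n\pm1}(t)$ and equating coefficients of $n^2\ln n$, $\ln n$ and the powers $n^{j/6}$ recursively pins down every $c_j$ except the coefficient $c_6$ of the linear term $n$ and the constant term $c_0$, since the second-difference operator $2f(n)-f(n+1)-f(n-1)$ annihilates affine functions of $n$.

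The third step fixes $c_6$ and $c_0$ exactly as in Sections 2.2 and 3.2. Integrating the expansion (\ref{ddn}) of $\frac{d}{dt}\ln D_n(t)$ from $0$ to $t$ produces $\ln\frac{D_n(t)}{D_n(0)}$, which already determines the complete $t$-dependence of the coefficient of $n$ and of the constant term; it then remains only to evaluate these at $t=0$. For $\ln D_n(0)$ I would use the result of Claeys, Krasovsky and Minakov \cite{CKM}: after the change of variables $x\mapsto n^{-1/6}x$, which gives $D_n(0)=n^{n^2/6}\,\widetilde{D}_n^{(3)}(0)$ with $\widetilde{D}_n^{(3)}(0):=\det\big(\int_0^{\infty}x^{i+j}\mathrm{e}^{-nx^6}dx\big)_{i,j=0}^{n-1}$ (cf. the $m=1$ computation in Remark \ref{re} and equation (\ref{dn01})), applying Proposition 1.10 and (1.22) of \cite{CKM} with the specialization $\alpha=0$, $\beta=12$ and letting $\mu\to0^{+}$ yields the large $n$ asymptotics of $\widetilde{D}_n^{(3)}(0)$, hence of $\ln D_n(0)$, to the needed order. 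Matching against the integrated expression gives the values of $c_6$ and $c_0$, and substituting everything back produces (\ref{dnt3}).

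The main obstacle is computational stamina rather than anything conceptual: the sextic weight forces the auxiliary quantities $G_n,Q_n,g_n,q_n$, and therefore $R_n$ and $r_n$ through (\ref{Rne})--(\ref{rne}), (\ref{s33}), (\ref{s83}), to be long polynomials in $\{\alpha_{n\pm k},\beta_{n\pm k}\}$, so the expansions (\ref{alp1}), (\ref{bet1}), (\ref{ddn}) and (\ref{a3}) must each be carried to enough orders — and the $\mu=\sqrt[6]{77}$ radical algebra tracked carefully — to guarantee that the stated error term $O(n^{-1/2})$ is honest. A secondary subtlety is verifying that the dictionary with \cite{CKM} is the right one for our normalization, i.e.\ that the rescaling factor is $n^{n^2/6}$ and that $\alpha=0,\ \beta=12,\ \mu\to0^{+}$ indeed reproduces the weight $\mathrm{e}^{-nx^6}$ on $(0,\infty)$.
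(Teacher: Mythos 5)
Your proposal is correct and follows essentially the same route as the paper: fix the ansatz $F_n(t)=c_{14}n^2\ln n+c_{13}\ln n+\sum_{j\le 12}c_jn^{j/6}$ from the Coulomb fluid free energy, determine all coefficients except $c_6$ and $c_0$ from $\ln\beta_n=2F_n-F_{n+1}-F_{n-1}$ (correctly noting the second difference annihilates affine terms), and then fix $c_6,c_0$ by integrating (\ref{ddn}) from $0$ to $t$ together with the rescaling $D_n(0)=n^{n^2/6}\widetilde{D}_n^{(3)}(0)$ and the Claeys--Krasovsky--Minakov asymptotics at $\alpha=0$, $\beta=12$, $\mu\to 0^{+}$. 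This is exactly the paper's argument.
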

\begin{proof}
Similarly as in the $m=1$ and $m=2$ cases, we have the large $n$ expansion form of $F_n(t)=-\ln D_n(t)$ from (\ref{pd}) and (\ref{a3}) by using Dyson's Coulomb fluid approach:
\be\label{fna3}
F_n(t)=c_{14}n^2\ln n+c_{13}  \ln n+\sum_{j=-\infty}^{12}c_{j}\:n^{j/6},
\ee
where $c_j,\;j=14, 13, 12, \ldots$ are the expansion coefficients to be determined.

Substituting (\ref{bet1}) and (\ref{fna3}) into the identity
$$
\ln\bt_n=2F_n(t)-F_{n+1}(t)-F_{n-1}(t)
$$
and letting $n\rightarrow\infty$, we obtain the expansion coefficients $c_j$ (except $c_6$ and $c_0$) by equating coefficients of powers of $n$ on both sides.
The large $n$ asymptotic expansion for $\ln D_n(t)=-F_n(t)$ reads
\begin{align}
\ln D_n(t) =& \:\frac{1}{6}n^{2} \ln n-\Big(\frac{1}{4}+ \ln (6^{1/3}\mu)\Big) n^{2}-\frac{252 \times 6^{1 / 3}  t n^{11 / 6}}{11\mu^5}-\frac{35 \times 6^{2 / 3}  t^{2} n^{5 / 3}}{11\mu^4} -\frac{3775 t^{3} n^{3 / 2}}{1089 \mu^3}\no\\[8pt]
&-\frac{11435 t^{4} n^{4 / 3}}{2662 \times 6^{2/3} \mu^2}-\frac{826741 t^{5} n^{7 / 6}}{819896 \times 6^{1 / 3} \mu}  -c_6n-\frac{37996955 t^{7} n^{5/6}}{4216608 \times 6^{2 / 3} \mu^5}-\frac{41668975 t^{8} n^{2 / 3}}{26090262 \times 6^{1 / 3} \mu^4} \no\\[8pt]
& -\frac{10806711485 t^{9} n^{1 / 2}}{38095647744 \mu^3}-\frac{712319948 \times 2^{1 / 3} t^{10} n^{1 / 3}}{4735382553 \times 3^{2 / 3} \mu^2}-\frac{9522701646921115 t^{11} n^{1 / 6}}{180713714672848896 \times 6^{1 / 3} \mu} \no\\[8pt]
&-\frac{1}{6} \ln n-c_0-\frac{5 t (12732126804149495465 t^{12}-98669688211375497216)}{169148036933786566656\times 6^{2/3}\mu^5 n^{1/6}}\no\\[8pt]
&-\frac{5 t^2 (1593222711395152 t^{12}+661793388694515)}{157242109153816764\times6^{1/3} \mu^4n^{1/3}}+O(n^{-1/2}).\no
\end{align}

On the other hand, integrating (\ref{ddn}) from $0$ to $t$ gives
\begin{align}\label{ra3}
\ln \frac{D_n(t)}{D_n(0)}  =&-\frac{252\times6^{1/3}tn^{11/6}}{11\mu^5}-\frac{35\times6^{2/3}t^2n^{5/3}}{11\mu^4}-\frac{3775 t^3 n^{3/2} }{1089 \mu^3}-\frac{11435 t^4 n^{4/3} }{2662 \times6^{2/3} \mu^2}\no\\[8pt]
&-\frac{826741 t^5 n^{7/6} }{819896 \times6^{1/3}\mu}-\frac{10287919 t^6 n }{30438639}-\frac{37996955 t^7 n^{5/6} }{4216608 \times6^{2/3} \mu^5}
-\frac{41668975 t^8 n^{2/3} }{26090262 \times6^{1/3}\mu^4}\no\\[8pt]
&-\frac{10806711485  t^9n^{1/2}}{38095647744 \mu^3}-\frac{712319948\times2^{1/3}  t^{10}n^{1/3}}{4735382553\times 3^{2/3}\mu^2}-\frac{9522701646921115 t^{11}n^{1/6}}{180713714672848896 \times6^{1/3}\mu}\no\\[8pt]
&+\frac{73302194573875 t^{12}}{2264804041310118}-\frac{5 t (12732126804149495465 t^{12}-98669688211375497216)}{169148036933786566656\times 6^{2/3} \mu^5n^{1/6}}\no\\[4pt]
&+O(n^{-1/3}).
\end{align}
Let
$$
\widetilde{D}_n^{(3)}(0):=\det\left(\int_0^{\infty} x^{i+j} \mathrm{e}^{-nx^6} d x\right)_{i, j=0}^{n-1}.
$$
We have
$$
D_n(0)=n^{n^2/6}\widetilde{D}_n^{(3)}(0).
$$
From Proposition 1.10 and (1.22) in \cite{CKM}, we obtain the following large $n$ asymptotics of $\widetilde{D}_n^{(3)}(0)$ by taking the special values $\al=0,\;\bt=12$ and letting $\mu\rightarrow0^{+}$:
$$
\ln \widetilde{D}_n^{(3)}(0)=-\Big(\frac{1}{4}+\ln (6^{1/3}\mu)\Big)n^2+n\ln(2\pi)-\frac{1}{6}\ln n+2\zeta'(-1)-\frac{1}{6}\ln 6+\frac{1}{8}\ln 11+o(1).
$$
It follows that
\be\label{dn0a3}
\ln D_n(0)=\frac{1}{6}n^2\ln n-\Big(\frac{1}{4}+\ln (6^{1/3}\mu)\Big)n^2+n\ln(2\pi)-\frac{1}{6}\ln n+2\zeta'(-1)-\frac{1}{6}\ln 6+\frac{1}{8}\ln 11+o(1).
\ee
The combination of (\ref{ra3}) and (\ref{dn0a3}) gives
$$
-c_6=\ln(2\pi)-\frac{10287919 t^6 }{30438639},
$$
$$
-c_0=\frac{73302194573875 t^{12}}{2264804041310118}+2\zeta'(-1)-\frac{1}{6}\ln 6+\frac{1}{8}\ln 11.
$$
The theorem is then established.
\end{proof}

Finally, we have the large $n$ asymptotics of the smallest eigenvalue distribution of the sextic Freud unitary ensemble.
\begin{theorem}
For fixed $t\in\mathbb{R}$, the probability $\mathbb{P}(n, t)$ in (\ref{pnt1}) for the sextic Freud unitary ensemble has the large $n$ asymptotics
\begin{align}
\ln \mathbb{P}(n, t)  =&\:n^2\ln\frac{5^{1/6}}{3^{1/6}\mu} -\frac{252 \times 6^{1 / 3}  t n^{11 / 6}}{11\mu^5}-\frac{35 \times 6^{2 / 3}  t^{2} n^{5 / 3}}{11\mu^4} -\frac{3775 t^{3} n^{3 / 2}}{1089 \mu^3}-\frac{11435 t^{4} n^{4 / 3}}{2662 \times 6^{2/3} \mu^2}\no\\[8pt]
&-\frac{826741 t^{5} n^{7 / 6}}{819896 \times 6^{1 / 3} \mu}  -\frac{10287919 t^6 n}{30438639}-\frac{37996955 t^{7} n^{5/6}}{4216608 \times 6^{2 / 3} \mu^5} -\frac{41668975 t^{8} n^{2 / 3}}{26090262 \times 6^{1 / 3} \mu^4}\no\\[8pt]
& -\frac{10806711485 t^{9} n^{1 / 2}}{38095647744 \mu^3}-\frac{712319948 \times 2^{1 / 3} t^{10} n^{1 / 3}}{4735382553 \times 3^{2 / 3} \mu^2} -\frac{9522701646921115 t^{11} n^{1 / 6}}{180713714672848896 \times 6^{1 / 3} \mu}\no\\[8pt]
&-\frac{1}{12} \ln n+\frac{73302194573875 t^{12}}{2264804041310118}+\zeta'(-1)-\frac{1}{6}\ln 2-\frac{1}{12}\ln 3+\frac{1}{8}\ln 11\no\\[8pt]
&-\frac{5 t (12732126804149495465 t^{12}-98669688211375497216)}{169148036933786566656\times 6^{2/3}\mu^5 n^{1/6}}\no\\[8pt]
&-\frac{5 t^2 (1593222711395152 t^{12}+661793388694515)}{157242109153816764\times6^{1/3} \mu^4n^{1/3}}+O(n^{-1/2}),\no
\end{align}
where $\mu=\sqrt[6]{77}$, and $\zeta'(\cdot)$ is the derivative of the Riemann zeta function.
\end{theorem}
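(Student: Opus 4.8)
The plan is to follow verbatim the route used for the Gaussian and quartic ensembles in Sections 2.2 and 3.2, so that the theorem reduces to combining results already in hand. The point of departure is the identity $(\ref{pnt})$, which for $A=0,\ B=1$ (so that $\mathcal{D}_n(t)=D_n(t)$) gives
\be
\ln \mathbb{P}(n,t)=\ln D_n(t)-\ln D_n(-\infty).
\ee
The numerator $\ln D_n(t)$ has just been computed in $(\ref{dnt3})$. Since all of the $t$-dependence of $\ln\mathbb{P}(n,t)$ sits in $\ln D_n(t)$, every $t$-dependent term in the statement — from the $n^{11/6}$ term down to the two displayed negative-power terms — is inherited directly from $(\ref{dnt3})$; what remains is only to reconcile the $t$-independent contributions (the $n^2\ln n$, $n^2$, $n\ln(2\pi)$, $\ln n$ and constant terms).

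For the denominator I would quote the large $n$ asymptotics of $\ln D_n(-\infty)$ for the sextic Freud weight $w_0(x)=\mathrm{e}^{-x^6}$ from \cite{MWC}, the $m=3$ analogue of $(\ref{dnm})$, which has the shape
\be
\ln D_n(-\infty)=\frac16 n^2\ln n-\Big(\frac14+\frac16\ln 60\Big)n^2+n\ln(2\pi)-\frac1{12}\ln n+\zeta'(-1)-\frac1{12}\ln 3+O(n^{-2}),
\ee
with the omitted terms $t$-independent and of order $n^{-2}$ and smaller. Subtracting this from $(\ref{dnt3})$: the $\tfrac16 n^2\ln n$ and $n\ln(2\pi)$ terms cancel; the $n^2$ terms combine, using $\tfrac16\ln 60-\tfrac13\ln 6=\tfrac16\ln\tfrac53$ together with the relation $693^{1/6}=3^{1/3}\mu$ (valid since $\mu^6=77$ and $693=9\cdot 77$) that rewrites the $n^2$-coefficient of $(\ref{dnt3})$ as $-\big(\tfrac14+\tfrac13\ln 6+\tfrac16\ln 77\big)$, to give $n^2\ln\frac{5^{1/6}}{3^{1/6}\mu}$; the $\ln n$ terms give $\big(-\tfrac16+\tfrac1{12}\big)\ln n=-\tfrac1{12}\ln n$; and the constants $\big(2\zeta'(-1)-\tfrac16\ln 6+\tfrac18\ln 11\big)-\big(\zeta'(-1)-\tfrac1{12}\ln 3\big)$ simplify, via $-\tfrac16\ln 6+\tfrac1{12}\ln 3=-\tfrac16\ln 2-\tfrac1{12}\ln 3$, to the stated $\zeta'(-1)-\tfrac16\ln 2-\tfrac1{12}\ln 3+\tfrac18\ln 11$.

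The subtraction itself is routine; the only delicate point is the bookkeeping of these $t$-independent constants, and in particular the scalar identities just used, along with the check that the $m=3$ counterpart of $(\ref{dnm})$ really carries no contribution at the orders $n^{-1/6}$ and $n^{-1/3}$ appearing in the statement — which is immediate, since $\ln D_n(-\infty)$ depends on no parameter $t$ and its expansion only corrects at order $n^{-2}$, so those two negative-power terms are exactly the ones inherited from $(\ref{dnt3})$. All the genuine labour has already been carried out upstream: $(\ref{dnt3})$ rests on the discrete system $(\ref{de3})$ fed into Dyson's Coulomb fluid computation, and the $\ln D_n(-\infty)$ asymptotics is imported from \cite{MWC}, so the present theorem follows by direct combination.
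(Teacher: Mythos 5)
Your proposal is correct and follows exactly the paper's own route: substitute the expansion (\ref{dnt3}) of $\ln D_n(t)$ and the $m=3$ asymptotics of $\ln D_n(-\infty)$ from \cite[Proposition 3.3]{MWC} (which is precisely the paper's (\ref{dnm3}), agreeing with the form you quote) into $\ln \mathbb{P}(n,t)=\ln D_n(t)-\ln D_n(-\infty)$. Your bookkeeping of the $t$-independent terms — in particular $\frac16\ln 60-\frac13\ln 6-\frac16\ln 77=\ln\frac{5^{1/6}}{3^{1/6}\mu}$ and the constant simplification — checks out.
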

\begin{proof}
Recall that
\be\label{pnt3}
\ln \mathbb{P}(n, t)=\ln D_n(t)-\ln D_n(-\infty).
\ee
The large $n$ asymptotic expansion (with higher order terms) of $\ln D_n(-\infty)$ was recently obtained in \cite[Proposition 3.3]{MWC}:
\begin{align}\label{dnm3}
\ln D_n(-\infty)=&\:\frac{1}{6}n^2\ln n -\left(\frac{1}{4}+\frac{1}{6}\ln 60\right) n^2+n \ln (2 \pi)-\frac{1}{12}\ln n +\zeta^{\prime}(-1)-\frac{1}{12}\ln 3\no\\[8pt]
&-\frac{7}{648 n^2}+\frac{8521}{1837080 n^4}+O(n^{-6}).
\end{align}
Substituting (\ref{dnt3}) and (\ref{dnm3}) into (\ref{pnt3}) yields the result.
\end{proof}

\section{Discussion}
In this paper, we have studied the large $n$ asymptotics of the smallest eigenvalue distributions of Freud unitary ensembles. We mainly considered three cases: the Gaussian unitary ensemble ($m=1$), the quartic Freud unitary ensemble ($m=2$) and the sextic Freud unitary ensemble ($m=3$). It can be seen that our method is able to study the higher $m$ cases, although the computations will become more and more complicated for increasing $m$.

In the $m=1$ case, it was found that the recurrence coefficient $\al_n(t)$ satisfies the fourth Painlev\'{e} equation, and the logarithmic derivative of the Hankel determinant $\mathcal{D}_n(t)$ satisfies the Jimbo-Miwa-Okamoto $\s$-form of the Painlev\'{e} IV equation. However, it is not clear if there is any relation between the recurrence coefficients, the Hankel determinants and the Painlev\'{e} equations when $m=2, 3, \ldots.$ This is because the recurrence coefficients and the auxiliary quantities no longer have the simple relations as in the $m=1$ case (see (\ref{s11}) and (\ref{s23})).

In the end, we would like to point out that all the large $n$ asymptotic expansions obtained in this paper can be easily extended up to any higher order following the procedure we present.

\section*{Acknowledgments}
This work was partially supported by the National Natural Science Foundation of China under grant number 12001212, by the Fundamental Research Funds for the Central Universities under grant number ZQN-902 and by the Scientific Research Funds of Huaqiao University under grant number 17BS402.

\section*{Conflict of Interest}
The authors have no competing interests to declare that are relevant to the content of this article.
\section*{Data Availability Statements}
Data sharing not applicable to this article as no datasets were generated or analysed during the current study.

\end{document}